
\documentclass{ws-ijgmmp}
\usepackage{float}
\usepackage{subfig}
\begin{document}

\markboth{Authors' Names}
{Instructions for Typing Manuscripts (Paper's Title)}

%
\catchline{}{}{}{}{}
%

\title{The geometry of null-like disformal transformations}

\author{Iarley P.\, Lobo}
\address{Departamento de F\'isica, Universidade Federal de Lavras, Caixa Postal 3037, 37200-000 Lavras-MG, Brazil\\
Departamento de F\'isica, Universidade Federal da Para\'iba, C. Postal 5008, Jo\~ao Pessoa, PB 58051-970, Brazil\\
\email{lobofisica@gmail.com and iarley\_lobo@fisica.ufpb.br}}

\author{Gabriel G.\ Carvalho}
\address{Centro de Inform\'atica, Universidade Federal de Pernambuco, Recife, Pernambuco,  50740-560, Brazil\\
Departamento de Física, Universidade Federal Rural de Pernambuco, 52171-900 Recife, PE, Brazil         
\email{ggc5@cin.ufpe.br}}

\maketitle

\begin{history}
\received{(Day Month Year)}
\revised{(Day Month Year)}
\end{history}

\begin{abstract}
Motivated by the hindrance of defining metric tensors compatible with the underlying spinor structure, other than the ones obtained via a conformal transformation, we study how some geometric objects are affected by the action of a disformal transformation in the closest scenario possible: the disformal transformation in the direction of a null-like vector field. Subsequently, we analyze symmetry properties such as mutual geodesics and mutual Killing vectors, generalized Weyl transformations that leave the disformal relation invariant, and introduce the concept of disformal Killing vector fields. In most cases, we use the Schwarzschild metric, in the Kerr-Schild formulation, to verify our calculations and results. We also revisit the disformal operator using a Newman-Penrose basis to show that, in the null-like case, this operator is not diagonalizable.
\end{abstract}

\keywords{Null-like vector; Disformal transformation; Spinor.}

\section{Introduction}
Effective metrics constitute an important tool for the description of physical phenomena, since they allow researchers to explore intermediary regimes of a given system using standard tools of differential geometry, instead of dealing with different mathematical structures derived from different fundamental theories. Just to mention few examples, we can refer to applications in Analogue Gravity (see \cite{Barcelo:2000tg,Barcelo:2005fc} and references therein). Another interesting application in quantum gravity can be found in \cite{Magueijo:2002xx,Assaniousssi:2014ota,Torrome:2015cga}.
\par
Among these effective metrics, here we consider those generated by the so called {\it disformal transformations}. They are, essentially, a generalization of conformal transformations, in which besides a conformal rescaling of a background metric, it is considered a rank-2 tensor field with properties that guarantee the invertibility of the disformal metric. Although one could define disformal transformations in the context of Riemannian geometry, the case of a pseudo-Riemannian geometry with Lorentzian signature is far more rich due to the changes in the causal structure and its applications to physics.
\par
Since its appearance, in the nineties due to Bekenstein's works \cite{beken1,Bekenstein:1992pj}, disformal transformations have had numerous applications. For instance TeVeS theories for MOND \cite{Bekenstein:2004ne}, bimetric theories of gravity \cite{Clifton:2011jh}, scalar \cite{Novello:2012wr} or scalar-tensor theories \cite{Koivisto:2012za,Ip:2015qsa,Sakstein:2014isa,Sakstein:2015jca} (including Mimetic \cite{Deruelle:2014zza,Arroja:2015wpa,Myrzakulov:2015kda,Sebastiani:2016ras} and Horndeski gravity \cite{Bettoni:2013diz,Zumalacarregui:2013pma,Gleyzes:2014qga}), disformal inflation \cite{Kaloper:2003yf}, chiral symmetry breaking \cite{Bittencourt:2014fpa}, anomalous magnetic moment for neutrinos \cite{Novello:2014yea}, disformal invariance of matter fields \cite{Falciano:2011rf,Goulart:2013laa,Bittencourt:2015ypa}, quantum gravity \cite{Carvalho:2015omv} and more on analog gravity \cite{Novello:2012pv,Novello:2011sh}.
\par
In Ref.\cite{Carvalho:2015omv}, besides demonstrating a link between disformal transformations and rainbow gravity, we further analyzed the mathematical properties of disformal transformations defined by time-like vectors (with respect to the background metric). We defined a disformal operator that acted on the (co-)tangent space, which worked as a decomposition of the disformal transformation on the metric tensor. This way, we managed to, in certain sense, explain the group operation rule of successive disformal transformations and solve some ambiguity issues. 
\par
In the present text, we continue to study the properties of these transformations, now using null-like vectors to define them (which we call null-like disformal transformations). We explicitly decompose the connection and the curvature, study invariant geodesics and Killing vectors, generalize Weyl transformations for the geometry induced a null-like disformal transformation, define the disformal Killing equation and study the disformal operator using a Newman-Penrose basis. We also further decompose the null-like disformal transformation by its action on the spinor space, which could serve as a laboratory for inducing a disformal vector and furthermore a disformal metric.
\par
This text is divided as follows: in section (\ref{preliminaries}) we fix the notation used throughout the text, revise the definition of a derivative operator and list some formulae concerning conformal transformations. In section (\ref{disformasse}), we study the geometry of null-like disformal transformations providing some generalizations of results valid in the conformal frame. In multiple examples we provide a test case to verify our results. In section (\ref{disfopsec}) we mention some relevant algebraic differences between the null and time-like disformal operators. Finally, in section (\ref{conclusion}), we conclude with some future perspectives. In \ref{app-disf}, we attempt to construct a disformal transformation in spinor space that yields a spacetime disformal transformation and it is shown that there is not a spin basis transformation that produces a spacetime disformal metric.


\section{Preliminaries: derivative operators, curvature and conformal transformations}\label{preliminaries}
Though we assume the reader to be familiarized with differential geometry, we start revising the definition of a derivative operator showing that any two derivative operators differ by a tensor. We shall use Penrose's abstract index notation \cite{wald84}, in which tensor equations with Latin indices are true tensor equations (i.e., valid in any coordinate system) and tensor equations with Greek indices represent equations valid on a given coordinate system.

\begin{definition}[Derivative operators] A {\it derivative operator} $D_a$ on a manifold $\cal M$ is a map which takes each $C^{r}$ tensor field of type $(k,l)$ to a tensor field of type $(k,l+1)$ and satisfies
\begin{enumerate}
\item Linearity: For all $A,B \in {\cal T}(k,l)$ and $\alpha,\beta \in \mathbb{R}$,
\begin{eqnarray}
D_{a} \left[\alpha A^{b_{1}\dots b_{k}}_{\ \ \ \ \ \ \ c_{1}\dots c_{l}} + \beta B^{b_{1}\dots b_{k}}_{\ \ \ \ \ \ \ c_{1}\dots c_{l}}\right] = \alpha D_{a}  A^{b_{1}\dots b_{k}}_{\ \ \ \ \ \ \ c_{1}\dots c_{l}} + \beta D_{a}  B^{b_{1}\dots b_{k}}_{\ \ \ \ \ \ \ c_{1}\dots c_{l}}.
\end{eqnarray}

\item Leibnitz rule: For all $A \in {\cal T}(k,l)$ and $B \in {\cal T}(k',l')$,
\begin{eqnarray}
D_{e}\left[ A^{a_{1}\dots a_{k}}_{\ \ \ \ \ \ \ b_{1}\dots b_{l}}  B^{c_{1}\dots c_{k'}}_{\ \ \ \ \ \ \ d_{1}\dots d_{l'}}\right]  &=& D_{e}\left[ A^{a_{1}\dots a_{k}}_{\ \ \ \ \ \ \ b_{1}\dots b_{l}} \right] B^{c_{1}\dots c_{k'}}_{\ \ \ \ \ \ \ d_{1}\dots d_{l'}}\nonumber\\ &+& A^{a_{1}\dots a_{k}}_{\ \ \ \ \ \ \ b_{1}\dots b_{l}} D_{e}\left[  B^{c_{1}\dots c_{k'}}_{\ \ \ \ \ \ \ d_{1}\dots d_{l'}}\right] .
\end{eqnarray}
\item Commutativity with contraction: For all $A \in {\cal T}(k,l)$,
\begin{eqnarray}
D_{d} \left[ A^{a_{1}\dots c\dots a_{k}}_{\ \ \ \ \ \ \ \ \ \ b_{1}\dots c\dots b_{l}} \right] = D_{d} A^{a_{1}\dots c\dots a_{k}}_{\ \ \ \ \ \ \ \ \ \ b_{1}\dots c\dots b_{l}}.
\end{eqnarray}

\item Consistency with the notion of vector fields as directional derivatives on scalar fields: For all $f \in C^{r}(\cal M)$ and all $t^{a} \in \Gamma(T\mathcal{M})$,
\begin{eqnarray}
t(f) = t^{a}D_{a}f.
\end{eqnarray}

\item Torsion free: For all $f \in C^{r}(\cal M)$ ,
\begin{eqnarray}
D_{a}D_{b} f = D_{b}D_{a}f.
\end{eqnarray}
\end{enumerate}
\end{definition}

\begin{lemma}\label{derivativelemma}
Consider a manifold $\cal M$ endowed with two metric tensors $g$ and $\hat{g}$ and their respective derivative operators $\nabla_{a}$ and $\hat{\nabla}_{a}$ (i.e., $\nabla_{c} g_{ab}=0$ and $\hat{\nabla}_{c} \hat{g}_{ab}=0$). Then, for any tensor field $T^{b_{1}\dots b_{k}}_{\ \ \ \ \ \ \ c_{1}\dots c_{l}}$ we have
\begin{eqnarray}
\hat{\nabla}_{a}T^{b_{1}\dots b_{k}}_{\ \ \ \ \ \ \ c_{1}\dots c_{l}} = \nabla_{a} T^{b_{1}\dots b_{k}}_{\ \ \ \ \ \ \ c_{1}\dots c_{l}} &+& \sum_{i} C^{b_{i}}_{\ \ ad} T^{b_{1}\dots d\dots b_{k}}_{\ \ \ \ \ \ \ \ \ \ c_{1}\dots c_{l}}\nonumber\\
&-&\sum_{j}C^{d}_{\ \ ac_{j}} T^{b_{1}\dots b_{k}}_{\ \ \ \ \ \ \ c_{1}\dots d\dots c_{l}},
\end{eqnarray}
where
\begin{eqnarray}\label{Csymbols}
C^{c}_{\ ab} = \frac{1}{2}\hat{g}^{cd} \left\{ \nabla_{a}\hat{g}_{bd}+\nabla_{b}\hat{g}_{ad}-\nabla_{d}\hat{g}_{ab}\right\}.
\end{eqnarray}
\end{lemma}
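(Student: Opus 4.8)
The plan is to split the argument into two independent parts: first, to show on purely algebraic grounds that the difference $\hat{\nabla}_a - \nabla_a$ of any two derivative operators is a \emph{tensor} field (so that it can act on each index of $T$ separately), and second, to pin down the explicit form of that tensor from the metric-compatibility conditions. I would stress at the outset that the displayed index formula is really a corollary of a single structural fact, and that almost all of the work lies in establishing that fact.

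For the first part I would begin by noting that the two operators agree on scalars: by property (4) both yield $t(f)$ when contracted with any $t^a$, so $\hat{\nabla}_a f = \nabla_a f$ for all $f \in C^r(\mathcal{M})$. I would then define $C^{c}{}_{ab}$ through its action on an arbitrary dual vector field $\omega_b$ by $(\nabla_a - \hat{\nabla}_a)\omega_b =: C^{c}{}_{ab}\,\omega_c$, and check that the left-hand side is linear over $C^r(\mathcal{M})$: applying both operators to $f\omega_b$ and using the Leibniz rule together with $\hat{\nabla}_a f = \nabla_a f$, the inhomogeneous ``derivative of $f$'' terms cancel, leaving $(\nabla_a - \hat{\nabla}_a)(f\omega_b) = f\,(\nabla_a - \hat{\nabla}_a)\omega_b$. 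This function-linearity is precisely the statement that the difference is tensorial, i.e. that its value at a point depends only on $\omega_b$ at that point and is therefore represented by a tensor $C^{c}{}_{ab}$. \emph{This is the step I expect to be the main obstacle}, since establishing tensoriality rigorously needs the standard localisation argument (a covector vanishing at $p$ can be written as a finite sum $\sum f_{(\mu)}\,\omega^{(\mu)}_b$ with each $f_{(\mu)}(p)=0$), rather than a one-line manipulation.

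With $C^{c}{}_{ab}$ in hand I would record its symmetry in the lower pair of indices: acting on an exact covector $\omega_b = \nabla_b f$ and using that both operators are torsion-free (property (5)) forces $C^{c}{}_{[ab]}\nabla_c f = 0$ for all $f$, hence $C^{c}{}_{ab} = C^{c}{}_{ba}$. The action on a vector $t^b$ then follows by demanding consistency on the scalar $t^b\omega_b$, which gives $(\hat{\nabla}_a - \nabla_a)t^b = C^{b}{}_{ad}t^d$ with the opposite sign to the covector case. Extending to a general $(k,l)$ tensor by repeated use of the Leibniz rule produces exactly one $C$-term per index, a $+C^{b_i}{}_{ad}$ for each contravariant slot and a $-C^{d}{}_{ac_j}$ for each covariant slot, which is the displayed formula.

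For the second part I would specialise the general formula to $\hat{g}_{bc}$ and impose $\hat{\nabla}_a\hat{g}_{bc}=0$ (note that only this compatibility, not $\nabla_c g_{ab}=0$, is actually needed here), obtaining the linear relation $\nabla_a\hat{g}_{bc} = C^{d}{}_{ab}\hat{g}_{dc} + C^{d}{}_{ac}\hat{g}_{bd}$. Writing this identity for the cyclic permutations of $(a,b,c)$ and forming the combination $\nabla_a\hat{g}_{bc} + \nabla_b\hat{g}_{ac} - \nabla_c\hat{g}_{ab}$, the symmetry $C^{d}{}_{ab} = C^{d}{}_{ba}$ together with $\hat{g}_{bd} = \hat{g}_{db}$ makes all but one pair of terms cancel, leaving $2\,C^{d}{}_{ab}\hat{g}_{dc}$. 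Contracting with $\frac{1}{2}\hat{g}^{ce}$ and relabelling ($c \to d$, $e \to c$) then isolates $C^{c}{}_{ab} = \frac{1}{2}\hat{g}^{cd}\{\nabla_a\hat{g}_{bd} + \nabla_b\hat{g}_{ad} - \nabla_d\hat{g}_{ab}\}$, which is (\ref{Csymbols}) and completes the proof.
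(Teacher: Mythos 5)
Your proof is correct, and it is essentially the argument the paper delegates to its citation (Wald, \emph{General Relativity}, Chapter 3): establish via function-linearity and localisation that $\hat{\nabla}_a-\nabla_a$ is tensorial, get the symmetry $C^{c}{}_{ab}=C^{c}{}_{ba}$ from torsion-freeness, extend to all tensors by Leibniz and duality, and extract the explicit form of $C^{c}{}_{ab}$ from $\hat{\nabla}_a\hat{g}_{bc}=0$ by the cyclic-permutation trick. Your side remark that only the $\hat{g}$-compatibility is used in the last step is also accurate.
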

\begin{proof}
See \cite{wald84}, chapter 3.
\end{proof}

The reader should reckon the expression for $C^{c}_{\ ab}$ as a generalized Christoffel symbol. In fact, the Christoffel symbols are obtained using the same procedure and choosing $\nabla_{a}$ and $\partial_{a}$ as derivative operators. Since it depends on the coordinate system used to define the derivative operator $\partial_{a}$, the Christoffel symbols are not a true tensor in another coordinate system. Using lemma \ref{derivativelemma} we have, for any $1-$form field $\omega_{b}$, that
\begin{eqnarray}
\left(\hat{\nabla}_{a} - \nabla_{a}\right)\omega_{b} &=& \hat{\nabla}_{a}\omega_{b} - \nabla_{a}\omega_{b}\nonumber\\
&=& \partial_{a}\omega_{b} - \hat{\Gamma}_{ab}^{c}\omega_{c} - \left(\partial_{a}\omega_{b} - \Gamma_{ab}^{c}\omega_{c}\right)\nonumber\\
&=& \left(\Gamma_{ab}^{c}-\hat{\Gamma}_{ab}^{c}\right)\omega_{c} .
\end{eqnarray}
On the other hand, using lemma \ref{derivativelemma}, we have $\hat{\nabla}_{a}\omega_{b} = \nabla_{a} \omega_{b} - C_{\ ab}^{c}\omega_{c}$, yielding
\begin{equation}\label{gammarelation}
\hat{\Gamma}_{ab}^{c} = \Gamma_{ab}^{c} +C_{\ ab}^{c}.
\end{equation}
Therefore, once we know $\Gamma_{ab}^{c}$, the knowledge of $\hat{\Gamma}_{ab}^{c}$ determines $C_{\ ab}^{c}$, and vice-versa. Let us see how to relate the curvature tensors associated with the two different affine connections $\nabla$ and $\hat{\nabla}$ in the form of the following
\begin{proposition}
The curvature tensor $\hat{R}_{abc}^{\ \ \ \ d}$ associated with the metric $\hat{g}$ in terms of the geometry defined by $g$ is given by
\begin{eqnarray}\label{riemanncurvature}
\hat{R}_{abc}^{\ \ \ \ d} = R_{abc}^{\ \ \ \ d} - 2\nabla_{[a}C^{d}_{\ b]c}+ 2C^{e}_{\ c[a}C^{d}_{\ b]e}.
\end{eqnarray}
\end{proposition}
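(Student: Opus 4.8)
The plan is to define each curvature tensor through the commutator of its own derivative operator acting on an arbitrary $1$-form and then to rewrite the hatted commutator entirely in terms of $\nabla$ and $C^c_{\ ab}$. Following \cite{wald84}, the Riemann tensors are fixed by
\[
(\nabla_a\nabla_b - \nabla_b\nabla_a)\omega_c = R_{abc}^{\ \ \ d}\omega_d, \qquad (\hat\nabla_a\hat\nabla_b - \hat\nabla_b\hat\nabla_a)\omega_c = \hat R_{abc}^{\ \ \ d}\omega_d,
\]
for every $1$-form $\omega_c$. Since $\omega_c$ is arbitrary, it is enough to reduce the left-hand side of the hatted relation to the form $(\cdots)_{abc}^{\ \ \ \ d}\,\omega_d$ and to read off the tensor in parentheses.

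First I would write, using Lemma \ref{derivativelemma}, the single-derivative relation $\hat\nabla_b\omega_c = \nabla_b\omega_c - C^e_{\ bc}\omega_e$, and abbreviate the right-hand side as $S_{bc}$. Treating $S_{bc}$ as a $(0,2)$ tensor field, a second application of Lemma \ref{derivativelemma} gives $\hat\nabla_a S_{bc} = \nabla_a S_{bc} - C^f_{\ ab}S_{fc} - C^f_{\ ac}S_{bf}$. Substituting $S_{bc}=\nabla_b\omega_c - C^e_{\ bc}\omega_e$ throughout and expanding, the result organizes into four families of terms: a pure $\nabla\nabla\omega$ term, terms in which one derivative hits $C$, terms in which one derivative hits $\omega$, and terms quadratic in $C$ (each multiplying $\omega$).

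I would then antisymmetrize in $[ab]$. By the defining relations above, the $\nabla\nabla\omega$ contribution reproduces $R_{abc}^{\ \ \ d}\omega_d$; the $\nabla C$ contributions assemble into $-2\nabla_{[a}C^d_{\ b]c}\omega_d$; and the surviving quadratic terms assemble into $2C^e_{\ c[a}C^d_{\ b]e}\omega_d$. The step demanding the most care --- and the one I expect to be the main obstacle --- is verifying that the mixed terms carrying a derivative of $\omega$ cancel identically. This cancellation is not automatic: it relies on the symmetry $C^c_{\ ab}=C^c_{\ ba}$, which is immediate from the explicit formula (\ref{Csymbols}) (equivalently, it is the torsion-freeness built into the definition of a derivative operator). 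The same symmetry is what eliminates the unwanted $C^f_{\ ab}C^e_{\ fc}$ pieces from the quadratic sector, leaving precisely the stated $C^e_{\ c[a}C^d_{\ b]e}$ structure. Finally, stripping off the arbitrary $\omega_d$ from both sides yields (\ref{riemanncurvature}).
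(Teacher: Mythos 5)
Your proposal is correct and follows essentially the same route as the paper's proof: both compute $[\hat{\nabla}_a,\hat{\nabla}_b]\omega_c$ by applying Lemma \ref{derivativelemma} twice (once to $\omega_c$ and once to the resulting $(0,2)$ tensor), substitute $\hat{\nabla}_m\omega_n=\nabla_m\omega_n-C^p_{\ mn}\omega_p$, and antisymmetrize, with the symmetry $C^c_{\ ab}=C^c_{\ ba}$ killing the spurious terms. Your write-up is merely more explicit than the paper's about where that symmetry is used; there is no substantive difference.
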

\begin{proof}
By definition
\begin{eqnarray}
 \hat{R}_{abc}^{\ \ \ \ d}\omega_{d}  &=& \left[ \hat{\nabla}_{a},\hat{\nabla}_{b}\right]\omega_{c}\nonumber\\
&=& \hat{\nabla}_{a}(\hat{\nabla}_{b}\omega_{c}) -\hat{\nabla}_{b}(\hat{\nabla}_{a}\omega_{c})\nonumber\\
&=& \nabla_{a}(\hat{\nabla}_{b}\omega_{c}) - C^{d}_{\ ab}(\hat{\nabla}_{d}\omega_{c})-C^{d}_{\ ac}(\hat{\nabla}_{b}\omega_{d}) - (a \leftrightarrow b),
\end{eqnarray}
where $a \leftrightarrow b$ represents the same expression but interchanging the indices $a$ and $b$. Replacing $\hat{\nabla}_{m}\omega_{n} = \nabla_{m}\omega_{n} - C_{\ mn}^{p}\omega_{p}$ and performing some index substitutions one gets
\begin{eqnarray}
\left[ \hat{\nabla}_{a},\hat{\nabla}_{b}\right]\omega_{c} = R_{abc}^{\ \ \ \ d}\omega_{d} + \left[\nabla_{b}C^{d}_{\ ac} - \nabla_{a}C^{d}_{\ bc} + C^{e}_{\ ac}C^{d}_{\ be} - C^{e}_{\ bc}C^{d}_{\ ae}\right] \omega_{d},
\end{eqnarray}
therefore concluding the proof.
\end{proof}
Note that when $\hat{\nabla} = \nabla$ the $C$ symbols are all zero and the geometry is kept the same, as expected.
\begin{corollary}
The Ricci and scalar curvature associated with $\hat{g}$ are, respectively, given by
\begin{eqnarray}
\hat{R}_{ac}&=& R_{ac}- 2\nabla_{[a}C^{b}_{\ b]c}+ 2C^{e}_{\ c[a}C^{b}_{\ b]e} \\
\hat{R} &=& \hat{g}^{ac} \hat{R}_{ac}.
\end{eqnarray}
\end{corollary}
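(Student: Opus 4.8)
The plan is to obtain both identities by straightforward contraction of the Riemann tensor formula (\ref{riemanncurvature}) established in the preceding proposition, so that no independent computation is required. Following the convention fixed by the defining relation $\hat{R}_{abc}^{\ \ \ \ d}\omega_{d} = [\hat{\nabla}_{a},\hat{\nabla}_{b}]\omega_{c}$, the Ricci tensor is recovered as the contraction of the second lower index against the contravariant index, namely $\hat{R}_{ac} = \hat{R}_{abc}^{\ \ \ \ b}$.

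First I would set $d = b$ in (\ref{riemanncurvature}) and sum over $b$. The leading term reduces immediately to the Ricci tensor of the background geometry, since $R_{abc}^{\ \ \ \ b} = R_{ac}$ by the same contraction convention applied to $R_{abc}^{\ \ \ \ d}$. The remaining two terms are unchanged in structure by the contraction: the antisymmetrization over the pair $[a\,b]$ simply carries the now-contracted index, yielding $-2\nabla_{[a}C^{b}_{\ b]c}$ and $+2C^{e}_{\ c[a}C^{b}_{\ b]e}$ respectively. Assembling the three pieces reproduces the first stated identity.

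For the scalar curvature there is essentially nothing to prove beyond the definition: one contracts $\hat{R}_{ac}$ with the inverse disformal metric to obtain $\hat{R} = \hat{g}^{ac}\hat{R}_{ac}$. I would only stress that it is $\hat{g}^{ac}$, and not the background inverse $g^{ac}$, that must appear here, since the scalar curvature is an invariant intrinsic to the hatted geometry; this is precisely where the dependence on the new metric (as opposed to the background one alone) enters.

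The only point requiring genuine care — and the place where a hurried reader could slip — is the index bookkeeping on the antisymmetrized terms: one must check that setting $d = b$ does not collide with the dummy index $e$, that the $[a\,b]$ bracket is left intact, and that the contraction convention is the same one implicit in the proposition. Once this bookkeeping is verified, both identities follow directly and the corollary is established.
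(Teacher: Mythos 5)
Your proof is correct and matches the paper's (implicit) argument: the corollary is obtained precisely by contracting $d=b$ in Eq.~(\ref{riemanncurvature}) and then tracing with $\hat{g}^{ac}$. The index bookkeeping you flag, including the use of the hatted inverse metric for $\hat{R}$, is handled exactly as the paper intends.
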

\subsection{The geometry of conformal transformations}\label{conformalsubsec}
Let $(\mathcal{M},g_{ab})$ be a spacetime. A conformal transformation of $(\mathcal{M},g_{ab})$, denoted by $(\mathcal{M},\tilde{g}_{ab})$,  is essentially a local angle-preserving change of scale where the new metric tensor is given by
\begin{eqnarray}
\tilde{g}_{ab} &=& \Omega^{2} g_{ab}.
\end{eqnarray}

It is worthwhile to mention that a conformal transformation, as defined here, {\it is not} a change of coordinates, but an actual change of the geometry. Formally, it should be written $(\mathcal{M},g_{ab}) \longmapsto (\tilde{\mathcal{M}},\tilde{g}_{ab})$. However, the spacetime $(\tilde{\mathcal{M}},\tilde{g}_{ab})$ is a subset of the manifold $\mathcal{M}$ endowed with another metric tensor defined on it, hence the abuse of notation. It should be clear that conformal transformations are not, in general, associated with a diffeomorphism of $\mathcal{M}$ \cite{wald84}.

Because now one can consider two metric tensors defined on $\mathcal{M}$, hence two affine connections, it is of major interest to use conformal transformations to change our dynamical variables: anything that is a function of $g_{ab}$ can be equally thought as a function of $\tilde{g}_{ab}$ and $\Omega$. We say that these quantities are expressed in the {\it conformal frame}.  To resume this section we list some quantities of interest in the conformal frame.
\begin{proposition}\label{conformalvariables}
In the conformal frame we have:
\begin{eqnarray}
C^{c}_{\ ab} &=& 2\delta_{\ (a}^{c} \nabla_{b)}\ln\Omega - g_{ab}g^{cd}\nabla_{d}\ln\Omega\\
\tilde{R}_{abc}^{\ \ \ \ d}&=&R_{abc}^{\ \ \ \ d} + 2\delta^{d}_{\ [a}\nabla_{b]}\nabla_{c}\ln\Omega - 2g^{de}g_{c[a}\nabla_{b]}\nabla_{e}\ln\Omega + 2(\nabla_{[a}\ln\Omega)\delta^{d}_{\ b]}\nabla_{c}\ln\Omega \nonumber\\
&-& 2(\nabla_{[a}\ln\Omega)g_{b]c}g^{df}\nabla_{f}\ln\Omega - 2g_{c[a}\delta^{d}_{\ b]}g^{ef}(\nabla_{e}\ln\Omega)(\nabla_{f}\ln\Omega)\\
\tilde{R}_{ac}&=&R_{ac}-(n-2)\nabla_{a}\nabla_{c}\ln\Omega -g_{ac}g^{de}\nabla_{d}\nabla_{e}\ln\Omega\nonumber\\
&+&(n-2)(\nabla_{a}\ln\Omega)(\nabla_{c}\ln\Omega)-(n-2)g_{ac}g^{de}(\nabla_{d}\ln\Omega)(\nabla_{e}\ln\Omega)\\
\tilde{R}&=&\Omega^{-2}\Big\{ R- 2(n-1)g^{ac}\nabla_{a}\nabla_{c}\ln\Omega -(n-2)(n-1)g^{ac}(\nabla_{a}\ln\Omega)(\nabla_{c}\ln\Omega)\Big\} \\
\tilde{\square}\phi&=& \Omega^{-2}\square \phi+(n-2)g^{ab}\Omega^{-3}(\nabla_{a}\Omega)(\nabla_{b}\phi),
\end{eqnarray}
 for any $C^{r}$($r\geq 2$) scalar field $\phi$.
\end{proposition}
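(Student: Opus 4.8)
The plan is to reduce every identity in the proposition to a single computation: the generalized Christoffel symbols $C^{c}_{\ ab}$ associated with the conformal pair $(g,\tilde{g})$. Once these are known, each curvature formula follows by mechanical substitution into results already established, namely \eqref{Csymbols} from Lemma~\ref{derivativelemma}, the Riemann identity \eqref{riemanncurvature}, and its Corollary. First I would compute $C^{c}_{\ ab}$ directly from \eqref{Csymbols}. Since $\tilde{g}_{ab}=\Omega^{2}g_{ab}$ gives $\tilde{g}^{ab}=\Omega^{-2}g^{ab}$, and since $\nabla_{c}g_{ab}=0$, one has $\nabla_{a}\tilde{g}_{bd}=2\Omega^{2}(\nabla_{a}\ln\Omega)g_{bd}$, so that
\begin{equation}
C^{c}_{\ ab} = \tfrac{1}{2}\Omega^{-2}g^{cd}\bigl(2\Omega^{2}(\nabla_{a}\ln\Omega)g_{bd} + 2\Omega^{2}(\nabla_{b}\ln\Omega)g_{ad} - 2\Omega^{2}(\nabla_{d}\ln\Omega)g_{ab}\bigr).
\end{equation}
The factors of $\Omega^{2}$ cancel and the contractions $g^{cd}g_{bd}=\delta^{c}_{b}$ collapse the first two terms into the symmetrized piece, yielding exactly $C^{c}_{\ ab}=2\delta^{c}_{\ (a}\nabla_{b)}\ln\Omega - g_{ab}g^{cd}\nabla_{d}\ln\Omega$, the first stated identity.

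Next, to obtain $\tilde{R}_{abc}{}^{d}$ I would insert this $C^{c}_{\ ab}$ into \eqref{riemanncurvature}. Because $\nabla$ annihilates both $g_{ab}$ and the Kronecker symbol, differentiating $C^{d}_{\ bc}$ acts only on the $\ln\Omega$ factors, converting the term $-2\nabla_{[a}C^{d}_{\ b]c}$ into the second-derivative contributions $\nabla\nabla\ln\Omega$, while the quadratic term $2C^{e}_{\ c[a}C^{d}_{\ b]e}$ produces the $(\nabla\ln\Omega)(\nabla\ln\Omega)$ pieces; collecting and antisymmetrizing reproduces the five-term expression stated for $\tilde{R}_{abc}{}^{d}$. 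For the Ricci tensor I would use the Corollary, which requires the trace $C^{b}_{\ bc}$. A one-line contraction gives $C^{b}_{\ bc}=n\,\nabla_{c}\ln\Omega$ with $n=\dim\mathcal{M}$, and it is precisely this trace that injects the dimension-dependent coefficients $(n-2)$. Substituting $C^{d}_{\ bc}$ and $C^{b}_{\ bc}$ into $\hat{R}_{ac}=R_{ac}-2\nabla_{[a}C^{b}_{\ b]c}+2C^{e}_{\ c[a}C^{b}_{\ b]e}$ and simplifying yields the stated $\tilde{R}_{ac}$.

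The scalar curvature then follows from $\tilde{R}=\tilde{g}^{ac}\tilde{R}_{ac}=\Omega^{-2}g^{ac}\tilde{R}_{ac}$: the overall prefactor $\Omega^{-2}$ comes from the inverse conformal metric, and tracing the Ricci formula with $g^{ac}$ (using $g^{ac}g_{ac}=n$ once more) collapses the coefficients to $2(n-1)$ and $(n-1)(n-2)$. Finally, for $\tilde{\square}\phi$ I would use that a derivative operator agrees with the ordinary gradient on scalars, so $\tilde{\nabla}_{a}\phi=\nabla_{a}\phi$, and then apply Lemma~\ref{derivativelemma} to the $1$-form $\nabla_{b}\phi$ to get $\tilde{\nabla}_{a}\tilde{\nabla}_{b}\phi=\nabla_{a}\nabla_{b}\phi - C^{c}_{\ ab}\nabla_{c}\phi$. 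Contracting with $\tilde{g}^{ab}=\Omega^{-2}g^{ab}$ and using the trace identity $g^{ab}C^{c}_{\ ab}=-(n-2)g^{cd}\nabla_{d}\ln\Omega$ together with $\nabla_{d}\ln\Omega=\Omega^{-1}\nabla_{d}\Omega$ produces the quoted expression.

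I expect the only genuine difficulty to be bookkeeping rather than anything conceptual: keeping the antisymmetrizations $[ab]$ consistent and not dropping terms when expanding the quadratic $C$-products in the Riemann identity, where sign and index errors are easy to commit. Every formula is a deterministic consequence of the single $C$-symbol computation, so once the trace identities $C^{b}_{\ bc}=n\nabla_{c}\ln\Omega$ and $g^{ab}C^{c}_{\ ab}=-(n-2)g^{cd}\nabla_{d}\ln\Omega$ are in hand, the dimension-dependent coefficients appear automatically and the derivation is purely algebraic.
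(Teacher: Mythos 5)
Your derivation is correct and is essentially the computation the paper delegates to Appendix~D of Wald: obtain $C^{c}_{\ ab}$ from Eq.~(\ref{Csymbols}) using $\nabla_{a}\tilde{g}_{bd}=2\Omega^{2}(\nabla_{a}\ln\Omega)g_{bd}$, then substitute into Eq.~(\ref{riemanncurvature}) and its traces. The trace identities you isolate, $C^{b}_{\ bc}=n\,\nabla_{c}\ln\Omega$ and $g^{ab}C^{c}_{\ ab}=-(n-2)g^{cd}\nabla_{d}\ln\Omega$, are exactly the right bookkeeping devices, and the resulting coefficients all check out.
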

\begin{proof}
Check the appendix D of \cite{wald84}.
\end{proof}


\section{The geometry of null-like disformal transformations}\label{disformasse}
\subsection{Connection, curvature and the d'Alembertian}\label{disfgeom}
\par
Disformal transformations can be seen as a generalization of conformal transformations. As such, they do not represent a change in coordinates, but a local change in the geometry instead. One might think of a conformal transformation as a smooth, {\it isotropic} and infinitesimal stretch at a point, whereas a disformal transformation is a smooth, {\it anisotropic} and infinitesimal stretch at a point.
Given a spacetime $(\mathcal{M},g_{ab})$, a null-like vector $V^{c}$ and two spacetime-dependent scalars $\alpha$ and $\beta$ with $\alpha>0$, we define a null-like disformal transformation $(\mathcal{M},g_{ab},V^{c},\alpha,\beta)\longmapsto (\mathcal{M},\hat{g}_{ab})$ as a change in geometry when the metric tensor changes according to
\begin{eqnarray}
\hat{g}_{ab} = \alpha g_{ab} + \beta V_{a}V_{b}\equiv\alpha g_{ab} + \beta g_{ac}g_{bd} V^{c}V^{d}.
\end{eqnarray}
It is easy to check that the inverse of the disformal metric in that case is given by
\begin{eqnarray}
\hat{g}^{ab} = \frac{1}{\alpha} g^{ab} -\frac{\beta}{\alpha^{2}} V^{a}V^{b}.
\end{eqnarray}
Since we are now dealing with a manifold endowed with two metric tensors, it is important to distinguish which metric tensor is being used when raising and lowering indices. One shall deal with this problem by explicitly writing the metric in all formulae in which indices are raised or lowered.

We can now consider some of the dynamical variables in the {\it disformal frame}. Using the definition of the tensor $C^{c}_{\ ab}$ in lemma \ref{derivativelemma} we find
\begin{eqnarray}
C^{c}_{\ ab} &=& \frac{1}{2\alpha}\Big[  2\alpha_{(a}\delta^{c}_{b)} + 2\beta V^cV_{(a;b)} + V^c\beta_{(a}V_{b)} +2\beta V^{c}_{\ (;a}V_{b)} + V^c\beta_{(a}V_{b)}\nonumber\\
&-& g_{ab}g^{cd}\alpha_{d} - 2\beta V_{(a;|d|}g^{cd}V_{b)}-\beta_dg^{cd}V_{(a}V_{b)}\Big]\nonumber\\ &-& \frac{1}{2\alpha^{2}}\Big[2 \beta\alpha_{(a}V_{b)}V^{c} - g_{ab}V^{c}\dot{\alpha}\beta-2\beta^2\dot{V}_{(a}V_{b)}V^{c}-\beta\dot{\beta}V_{(a}V_{b)}V^c\Big],
\end{eqnarray}

\begin{eqnarray}\label{scalarcurvature}
&\hat{R}&\, =\, \frac{1}{\alpha}R - \frac{\beta}{\alpha^{2}}R_{ac}V^{a}V^{c}+ \frac{1}{\alpha^2}\left\{(1-n)\square\alpha +\beta V^aV^b_{\ \ ;a;b}- 2\beta V^a\square V_{a}+\beta (\nabla\cdot V)^2\right. \nonumber\\
 &+& \beta V^a_{\ \ ;b}V^b_{\ \ ;a}-2\beta g^{bc}V^a_{\ ;b}V_{a;c}+2\dot{\beta} (\nabla\cdot V)+\beta_{a;b}V^aV^b+2\beta_a\dot{V}^a+\beta\left(\nabla\cdot V\right)^{\bullet}\Big\}\nonumber\\ 
 &+&\frac{1}{\alpha^3}\Big\{-(n-1)(n-6)\alpha\star\alpha+(n-2)\beta V^aV^b\alpha_{a;b}+(n-3)\dot{\alpha}\dot{\beta}+(n-3)\beta\dot{V}^a\alpha_{a}\nonumber\\
&+& (n-3)\dot{\alpha}\beta(\nabla\cdot V)+\frac{\beta^2}{2}\dot{V}^a\dot{V}_a\Big\} + \frac{1}{4\alpha^4}(n-9)(n-2)(\dot{\alpha})^2\beta,
\end{eqnarray}
and
\begin{eqnarray}
&\hat{\square}\Phi&\, =\, \frac{1}{\alpha}\square\Phi + \frac{\beta}{\alpha^{3}}\frac{(4-n)}{2}\dot{\alpha}\dot{\Phi}\\
&+&\frac{1}{\alpha^{2}}\left[ \frac{(n-1)}{2}\Phi^{a}\alpha_{a} - \dot{\beta}\dot{\Phi} - \beta\dot{\Phi}(\nabla\cdot V) - \beta\dot{V}^{c}\Phi_{c}-\beta V^{a}V^{b}\Phi_{a;b}\right],\nonumber
\end{eqnarray}
where $n=\dim\mathcal{M}$, $\beta_{a} = \partial_{a}\beta$, $\alpha_{a} = \partial_{a}\alpha$, $\square\alpha=\nabla^{a}\nabla_{a}\alpha$, $\alpha\star\alpha = g^{ab}\partial_{a}\alpha\partial_{b}\alpha$, $\dot{\alpha_{a}}=V^{b}\nabla_{b}\partial_{a}\alpha$, $V_{a;c}=\nabla_cV_a$, $\nabla\cdot V = \nabla_{a}V^{a}$, $\square V_a=g^{bc}V_{a;b;c}$, $\dot{V}^{a}=V^b\nabla_b V^a$ and $(\nabla\cdot V)^{\bullet}=(\nabla\cdot V)_{,a}V^a$. Setting $V^{a} = 0$ and $\alpha = \Omega^{2}$ one can recover the geometry in the conformal frame given in proposition \ref{conformalvariables}.  

In the examples below, we are going to refer them to a test case. They are used to show the validity of our formulae in an actual example. In order to avoid defining it in every example, the reader should have in mind the following
\begin{definition}[The test case]
The manifold $\mathcal{M}$ is $\mathbb{R}^{4}$ with Cartesian coordinates $\{x^{\mu}\} = (t,x,y,z)$ and the background metric is the Minkowski one $g_{ab}=\mbox{diag}(1,-1,-1,-1)$. The disformal parameters are $\alpha=1$ and $\beta = \frac{2m}{r}$, and the null-like vector $V^{\mu} = \left(1,-\frac{x}{r},-\frac{y}{r}, -\frac{z}{r} \right)$, and $r=\sqrt{x^{2} + y^{2}+z^{2}}$. This is the Schwarzschild metric in the Kerr-Schild formulation. Of course one could use the Kerr-Newman metric as a test case, but this would complicate significantly the calculations without adding anything new to the problem.
\end{definition}

\begin{example}[The test case -- I] 
The derivative operator compatible with $g$ is the coordinate system derivative operator $\partial_{a}$ and $\Gamma_{ab}^{c} \equiv 0$ (and therefore The Riemann and Ricci tensors are null everywhere). Using Eq. (\ref{gammarelation}) we have
\begin{equation}
C_{\ ab}^{c} = \hat{\Gamma}_{ab}^{c},
\end{equation}
which is simply the Eq. (\ref{Csymbols}) when one replaces $\nabla_{a}$ by $\partial_{a}$, as expected. Using Eq. (\ref{riemanncurvature}) and using the fact that $R_{abc}^{\ \ \ \ d} =0$, $\nabla_{a} = \partial_{a}$ and $C_{\ ab}^{c} = \hat{\Gamma}_{ab}^{c}$ we have
\begin{eqnarray}
\hat{R}_{abc}^{\ \ \ \ d} &=& R_{abc}^{\ \ \ \ d} + \nabla_{b}C^{d}_{\ ac} - \nabla_{a}C^{d}_{\ bc} + C^{e}_{\ ac}C^{d}_{\ be} - C^{e}_{\ bc}C^{d}_{\ ae}\nonumber\\
&=&\partial_{b}\hat{\Gamma}^{d}_{ac} - \partial_{a}\hat{\Gamma}^{d}_{bc} + \hat{\Gamma}^{e}_{ac}\hat{\Gamma}^{d}_{be} - \hat{\Gamma}^{e}_{bc}\hat{\Gamma}^{d}_{ae},
\end{eqnarray}
which is the expression for the Riemann curvature tensor associated with $\hat{g}$ as expected. Substituting $\alpha=1$, $\beta=2m/r$, and $n=4$ in the expression for the scalar curvature (\ref{scalarcurvature}) we obtain $\hat{R} = 0$. 
\end{example}

\subsection{Mutual geodesics}\label{geodesicssubsec}
One might be interested in comparing the geodesics with respect to $\nabla$ with those with respect to $\hat{\nabla}$. As is the case for conformal transformations, one cannot expect the geodesics to be preserved. Nonetheless, for conformal transformations null geodesics are indeed preserved, although the geodesic in the conformal frame is not affinely parameterized. This is an extraordinary feature that allows one to study causality up to a conformal transformation (for instance, using the Carter-Penrose conformal diagrams). Since disformal transformations provide a generalization of the conformal ones, we cannot expect the geodesics to be preserved, not even in the null case. Surprisingly, if the tangent vector of a null geodesic satisfies an extra condition, this null geodesic is preserved. This result can be stated as the following
\begin{proposition}
Let $(\mathcal{M},g_{ab},V^{c},\alpha,\beta)\longmapsto (\mathcal{M},\hat{g}_{ab})$ be a null-like disformal transformation. Let $w^{a}\in\Gamma(T\mathcal{M})$ be the tangent vector to an affinely parameterized null geodesic $\gamma$ of the background metric. If $w_{a}V^{a} = w^{a}V_{a} = g_{ab}w^{a}V^{b}=0$, then $\gamma$ is a null geodesic with respect to $\hat{\nabla}$.
\end{proposition}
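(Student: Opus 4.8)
The plan is to establish two independent facts: that $\gamma$ remains null with respect to $\hat{g}$, and that its tangent $w^{a}$ satisfies the (possibly non-affine) geodesic equation for $\hat{\nabla}$. The null condition is immediate: contracting the disformal metric gives $\hat{g}_{ab}w^{a}w^{b} = \alpha\, g_{ab}w^{a}w^{b} + \beta(V_{a}w^{a})(V_{b}w^{b})$, where the first term vanishes because $\gamma$ is null for $g$ and the second vanishes by the hypothesis $w^{a}V_{a}=0$, so $\hat{g}_{ab}w^{a}w^{b}=0$.

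For the geodesic property I would invoke Lemma~\ref{derivativelemma} in the form $\hat{\nabla}_{a}w^{b} = \nabla_{a}w^{b} + C^{b}_{\ ad}w^{d}$, so that $w^{a}\hat{\nabla}_{a}w^{b} = w^{a}\nabla_{a}w^{b} + C^{b}_{\ ad}w^{a}w^{d}$. Since $\gamma$ is affinely parameterized with respect to $\nabla$, the first term is zero and the entire problem reduces to evaluating the symmetric contraction $C^{b}_{\ ad}w^{a}w^{d}$. Substituting the defining expression Eq.~(\ref{Csymbols}) together with $\hat{g}_{ab}=\alpha g_{ab}+\beta V_{a}V_{b}$, I would expand the three terms of the generalized Christoffel symbol and discard everything proportional to $g_{ab}w^{a}w^{b}$ (null condition) or to $V_{a}w^{a}$ (orthogonality).

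The key step, and the one place where the computation is not purely mechanical, is disposing of the surviving terms containing $\nabla V$. These collapse because of the identity $w^{a}w^{b}\nabla_{a}V_{b}=0$, which I would derive by differentiating the constraint $w^{b}V_{b}=0$ \emph{along the curve}: since this constraint holds only on $\gamma$, one may differentiate only in the direction of $w$, obtaining $0 = w^{a}\nabla_{a}(w^{b}V_{b}) = (w^{a}\nabla_{a}w^{b})V_{b} + w^{a}w^{b}\nabla_{a}V_{b}$, where the first term vanishes by the affine geodesic equation. I expect this to be the main obstacle, as it forces one to exploit the affine parameterization rather than treat $w^{a}V_{a}=0$ as an ambient identity extendable off $\gamma$.

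After these cancellations, the only surviving contribution inside the bracket is $(w^{a}\nabla_{a}\alpha)\,w_{d}$ with $w_{d}=g_{bd}w^{b}$. Contracting this with $\hat{g}^{cd}=\alpha^{-1}g^{cd}-\alpha^{-2}\beta V^{c}V^{d}$ and using $V^{d}w_{d}=0$ once more yields $\hat{g}^{cd}w_{d}=\alpha^{-1}w^{c}$, so that $w^{a}\hat{\nabla}_{a}w^{c} = (w^{a}\nabla_{a}\ln\alpha)\,w^{c}$. This is precisely the geodesic equation in a non-affine parameterization, and together with the null condition established above it shows that $\gamma$ is a null geodesic of $\hat{\nabla}$ (reparameterizable to an affine one), completing the argument.
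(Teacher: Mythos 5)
Your proposal is correct and follows essentially the same route as the paper's proof: reduce the problem to evaluating $C^{b}_{\ ac}w^{a}w^{c}$, dispose of the $\nabla V$ terms by differentiating the constraint $w^{a}V_{a}=0$ along $\gamma$ and invoking the affine geodesic equation, and arrive at $w^{a}\hat{\nabla}_{a}w^{b}=(w^{a}\nabla_{a}\ln\alpha)\,w^{b}$, i.e.\ a non-affinely parameterized geodesic. The only addition is your explicit check that $\gamma$ remains null with respect to $\hat{g}$, which the paper leaves implicit.
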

\begin{proof}
We have
\begin{eqnarray*}
w^{a}\hat{\nabla}_{a}w^{b} &=& w^{a}{\nabla}_{a}w^{b}+ C^{b}_{\ ac}w^{a}w^{c}\\&=&C^{b}_{\ ac}w^{a}w^{c},
\end{eqnarray*}
since $w^{a}{\nabla}_{a}w^{b}=0$. Using the expression for $C^{b}_{\ ac}$ yields
\begin{eqnarray*}
w^{a}\hat{\nabla}_{a}w^{b}&=&\Big\{\frac{1}{2\alpha}\Big[  2\alpha_{(a}\delta^{b}_{c)} + 2\beta V^bV_{(a;c)} + V^b\beta_{(a}V_{c)} +2\beta V^{b}_{\ (;a}V_{c)} + V^b\beta_{(a}V_{c)}\nonumber\\
&-& g_{ac}g^{bd}\alpha_{d} - 2\beta V_{(a;|d|}g^{bd}V_{c)}-\beta_dg^{bd}V_{(a}V_{c)}\Big]\nonumber\\ &-& \frac{1}{\alpha^{2}}\Big[2 \beta\alpha_{(a}V_{c)}V^{b} - g_{ac}V^{b}\dot{\alpha}\beta-2\beta^2\dot{V}_{(a}V_{c)}V^{b}-\beta\dot{\beta}V_{(a}V_{c)}V^b\Big]\Big\}w^{a}w^{c}\\
&=&\left(w^{a}\nabla_{a}\ln\alpha\right)w^{b}.
\end{eqnarray*}
%
%
%
%
%
The parameters of the geodesics are then related by $\frac{d\hat{\lambda}}{d\lambda}=c\alpha$, where $c\in\mathbb{R}$ is a constant, showing that it is not affinely parameterized.  
We used that $2\beta V_{(c;a)}V^{b}w^{a}w^{c} = 0$. Indeed:
\begin{eqnarray*}
2\beta V_{(c;a)}V^{b}w^{a}w^{c} &=&\left\{\beta V^{b} \nabla_{a}V_{c} + \beta V^{b} \nabla_{c}V_{a}\right\}w^{a}w^{c}\\
&=&\beta V^{b}w^{a}w^{c}\nabla_{a}V_{c}+\beta V^{b}w^{a}w^{c}\nabla_{c}V_{a}.
\end{eqnarray*}
Investigating the first term (the second term follows by analogy), we have
\begin{eqnarray*}
\beta V^{b}w^{a}\nabla_{a}(w^{c}V_{c}) = 0 = \underbrace{\beta V^{b}w^{a}w^{c}\nabla_{a}V_{c}}_{\text{first term}} + \beta V^{b}V_{c}\underbrace{w^{a}\nabla_{a}w^{c}}_{=0\ (\text{geodesic})}.
\end{eqnarray*}
\end{proof}
The proposition above shows that for the case of a null-like disformal transformation the null vector $w^{a}$ must satisfy an extra condition, to wit $V_{a}w^{a}=0$. This extra condition involves a coupling between $w^{a}$, the  disformal portion of the transformation. Therefore, since conformal transformation preserve causal relations, the study of the causal structure must rely on $\beta$ and $V^{c}$. As shown in \cite{Carvalho:2015omv}, $\beta$ plays the crucial role in the change of the causal structure. 
\begin{example}[The test case -- II]
The vector defining this disformal transformation, $V^{\mu} = \left(1,-\frac{x}{r},-\frac{y}{r},-\frac{z}{r}\right)$, satisfy the conditions of the proposition above. Therefore:
\begin{eqnarray}
V^{\mu}\nabla_{\mu} V^{\nu} = V^{\mu}\hat{\nabla}_{\mu} V^{\nu}=0. 
\end{eqnarray}
These are precisely the radial null-like geodesic of Minkowski and Schwarzschild spacetimes. 
\end{example}


\subsection{Mutual Killing vector fields}\label{killingsubsec}
Let $(\mathcal{M},g_{ab})$ be a spacetime. We say that $\xi^{a}\in\Gamma(T\mathcal{M})$ is a Killing vector field of $g_{ab}$ if
\begin{eqnarray}\label{Lie}
\mathcal{L}_{\xi}g_{ab} \equiv \xi^{c}\partial_{c}g_{ab} + (\partial_{a}\xi^{c})g_{cb}+(\partial_{b}\xi^{c})g_{ca}=0,
\end{eqnarray}
i.e., the Lie derivative of $g_{ab}$ in the direction of $\xi^{c}$ is zero. The Killing vectors of a metric are associated with the symmetries of that metric. Despite the appearance, equation (\ref{Lie}) is known to be covariant.

Needless to say, knowing the Killing vectors is of major importance. The proposition below relates the Killing vectors of the background and disformal metrics. The corollary provides a well-known result when the disformal transformation is purely conformal.

\begin{proposition}
Let $(\mathcal{M},g_{ab},V^{c},\alpha,\beta)\longmapsto (\mathcal{M},\hat{g}_{ab})$ be a null-like disformal transformation and let $\xi^{c}$ be a Killing vector field of $\hat{g}_{ab}$. Then, $\xi^{c}$ is a Killing vector field of $g_{ab}$ if, and only if, 
\begin{eqnarray}
(\xi^{c}\partial_{c}\alpha) g_{ab} + \xi^{c}\partial_{c}(\beta V_{a}V_{b})+2\beta V_{(a}\left(\partial_{b)}\xi^{c}\right)V_{c}=0.
\end{eqnarray}
\end{proposition}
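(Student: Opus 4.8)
The plan is to apply the Lie derivative $\mathcal{L}_{\xi}$ directly to the disformal relation $\hat{g}_{ab}=\alpha g_{ab}+\beta V_{a}V_{b}$ and exploit that $\mathcal{L}_{\xi}$ is linear and obeys the Leibniz rule. Since $\xi^{c}$ is by hypothesis a Killing field of $\hat{g}$, I would start from $\mathcal{L}_{\xi}\hat{g}_{ab}=0$ and expand the right-hand side term by term, treating the conformal part and the genuinely disformal part separately.

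For the first term I regard $\alpha$ as a scalar field, so the Leibniz rule gives $\mathcal{L}_{\xi}(\alpha g_{ab})=(\mathcal{L}_{\xi}\alpha)\,g_{ab}+\alpha\,\mathcal{L}_{\xi}g_{ab}=(\xi^{c}\partial_{c}\alpha)\,g_{ab}+\alpha\,\mathcal{L}_{\xi}g_{ab}$, using $\mathcal{L}_{\xi}\alpha=\xi^{c}\partial_{c}\alpha$. For the second term I regard $\beta V_{a}V_{b}$ as a single $(0,2)$ tensor and apply the coordinate expression (\ref{Lie}) of the Lie derivative, obtaining $\mathcal{L}_{\xi}(\beta V_{a}V_{b})=\xi^{c}\partial_{c}(\beta V_{a}V_{b})+(\partial_{a}\xi^{c})\beta V_{c}V_{b}+(\partial_{b}\xi^{c})\beta V_{a}V_{c}$, where the last two index-rotated terms collect precisely into the symmetrized form $2\beta V_{(a}(\partial_{b)}\xi^{c})V_{c}$.

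Setting the sum of these contributions to zero (because $\xi$ is Killing for $\hat{g}$) and isolating the conformal piece yields the single identity
\[
\alpha\,\mathcal{L}_{\xi}g_{ab}=-\Big[(\xi^{c}\partial_{c}\alpha)\,g_{ab}+\xi^{c}\partial_{c}(\beta V_{a}V_{b})+2\beta V_{(a}(\partial_{b)}\xi^{c})V_{c}\Big].
\]
Because $\alpha>0$ by the definition of a null-like disformal transformation, the prefactor $\alpha$ never vanishes, so $\mathcal{L}_{\xi}g_{ab}=0$ holds if and only if the bracket on the right vanishes — which is exactly the stated condition. Both directions of the equivalence follow immediately from this one identity.

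The only point requiring genuine care is the strict positivity $\alpha>0$: it is what upgrades the elementary one-way implication into an equivalence, since otherwise one could not divide through by $\alpha$. I would also remark that, although (\ref{Lie}) is written with partial derivatives in a chart, it is covariant (as noted after its statement), so the same derivation goes through verbatim with $\partial_{a}$ replaced by any torsion-free $\nabla_{a}$, the connection terms cancelling inside the Lie derivative. I do not anticipate any substantive obstacle beyond the bookkeeping of the symmetrization in the disformal term.
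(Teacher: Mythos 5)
Your proposal is correct and follows essentially the same route as the paper: expand $\mathcal{L}_{\xi}\hat{g}_{ab}=0$ using the Leibniz rule on $\alpha g_{ab}+\beta V_{a}V_{b}$, collect the index-rotated terms into the symmetrized form, and use $\alpha>0$ to convert the resulting identity into the stated equivalence. No gaps.
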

\begin{proof}
Using the equation (\ref{Lie}) for the metric $\hat{g}_{ab}$, we have:
\begin{eqnarray}
\mathcal{L}_{\xi}\hat{g}_{ab} = \xi^{c}\partial_{c}\hat{g}_{ab} + (\partial_{a}\xi^{c})\hat{g}_{cb}+(\partial_{b}\xi^{c})\hat{g}_{ca}=0.
\end{eqnarray}
Replacing the expression for $\hat{g}_{ab}$ yields
\begin{eqnarray}
0=\mathcal{L}_{\xi}\hat{g}_{ab}=(\xi^{c}\partial_{c}\alpha)g_{ab} + \alpha\mathcal{L}_{\xi}g_{ab} + \xi^{c}\partial_{c}(\beta V_{a}V_{b})+2\beta V_{(a}\left(\partial_{b)}\xi^{c}\right)V_{c}.
\end{eqnarray}
Since $\alpha\neq 0$, $\mathcal{L}_{\xi}g_{ab}=0$ if, and only if, 
\begin{eqnarray}
(\xi^{c}\partial_{c}\alpha) g_{ab} + \xi^{c}\partial_{c}(\beta V_{a}V_{b})+2\beta V_{(a}\left(\partial_{b)}\xi^{c}\right)V_{c}=0.
\end{eqnarray}
\end{proof}
\begin{corollary}
For conformal transformations, $\xi^{a}$ is at least a conformal Killing vector field of $g_{ab}$ and will be a true Killing vector field if, and only if, $\alpha$ is constant along the orbits (integral curves) of $\xi^{a}$.
\end{corollary}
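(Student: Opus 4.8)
The plan is to read off both assertions from the purely conformal specialization $\beta=0$, for which the disformal metric reduces to $\hat{g}_{ab}=\alpha g_{ab}$ (with $\alpha=\Omega^{2}$). For the equivalence, I would simply set $\beta=0$ directly in the condition furnished by the preceding proposition, which collapses to $(\xi^{c}\partial_{c}\alpha)\,g_{ab}=0$. Since $g_{ab}$ is nondegenerate it is never the zero tensor, so this holds if and only if the scalar $\xi^{c}\partial_{c}\alpha$ vanishes, i.e.\ if and only if $\alpha$ is constant along the integral curves of $\xi^{a}$. This is exactly the stated necessary-and-sufficient criterion for $\xi^{a}$ to be a true Killing vector of the background.

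For the first assertion---that $\xi^{a}$ is at least a conformal Killing vector of $g_{ab}$---I would expand $\mathcal{L}_{\xi}\hat{g}_{ab}$ by the Leibnitz rule, as in the proof of the proposition but with the disformal terms absent:
\begin{eqnarray}
0=\mathcal{L}_{\xi}\hat{g}_{ab}=(\xi^{c}\partial_{c}\alpha)\,g_{ab}+\alpha\,\mathcal{L}_{\xi}g_{ab}.
\end{eqnarray}
Dividing by $\alpha>0$ isolates the background Lie derivative,
\begin{eqnarray}
\mathcal{L}_{\xi}g_{ab}=-\frac{\xi^{c}\partial_{c}\alpha}{\alpha}\,g_{ab},
\end{eqnarray}
whose right-hand side is proportional to $g_{ab}$. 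This is precisely the conformal Killing equation $\mathcal{L}_{\xi}g_{ab}=2\varphi\,g_{ab}$ with conformal factor $\varphi=-\frac{1}{2}\,\xi^{c}\partial_{c}\ln\alpha$, establishing that $\xi^{a}$ is a conformal Killing vector field of the background metric.

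There is no serious obstacle in this argument; it is a direct specialization of the proposition. The only point deserving a word of care is the appeal to nondegeneracy of $g_{ab}$: it is what guarantees that the trace-like term $(\xi^{c}\partial_{c}\alpha)\,g_{ab}$ cannot be cancelled by anything else once $\beta=0$, so that its vanishing is genuinely equivalent to $\xi^{c}\partial_{c}\alpha=0$. Finally, I would note the consistency between the two routes---the conformal factor $\varphi$ vanishes exactly when $\alpha$ is constant along the flow of $\xi^{a}$---which ties the ``at least conformal'' statement to the ``true Killing'' criterion and closes the proof.
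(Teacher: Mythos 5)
Your proposal is correct and follows exactly the route the paper intends: the corollary is left unproved there precisely because it is the immediate $\beta=0$ specialization of the preceding proposition, which is what you carry out, together with the standard rearrangement $\mathcal{L}_{\xi}g_{ab}=-(\xi^{c}\partial_{c}\ln\alpha)\,g_{ab}$ identifying $\xi^{a}$ as a conformal Killing vector. The remark on nondegeneracy of $g_{ab}$ is a sensible (if routine) point of care and does not change the argument.
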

\begin{example}[The test case -- III]
In that case, $\alpha\equiv1$ and therefore the necessary and sufficient condition for $\xi^{c}$ to be a mutual Killing vector field of the Minkowski and Schwarzschild metrics is
\begin{eqnarray}\label{killingtestcase}
\xi^{c}\partial_{c}(\beta V_{a}V_{b})+2\beta V_{(a}\left(\partial_{b)}\xi^{c}\right)V_{c}=0.
\end{eqnarray}
The rotational Killing vectors $R= -y\partial_{x}+x\partial_{y}$, $S=z\partial_{x}-x\partial_{z}$, $T=-z\partial_{y}+y\partial_{z}$ and the time-translation Killing vector $\partial_{t}$ are shown to satisfy the equation (\ref{killingtestcase}). Therefore, they are Killing vector fields for both metrics.
\end{example}

\subsection{Disformal Killing equation}\label{disformalkillingeq}
In this section we generalize the notion of conformal Killing vectors to the disformal case. It is shown that, under some hypotheses, it is possible to find a solution for this disformal Killing equation. The meaning of this solution is discussed in the end of this section. We begin with the following
\begin{definition}
A vector field $X^c$, satisfying the equation
\begin{equation}\label{disfkilldef}
\left({\cal L}_Xg\right)_{ab}=\alpha\, g_{ab} + \beta\, V_aV_b,
\end{equation}
is called a {\it null-like disformal Killing vector} of the metric $g$, where $\alpha$ and $\beta$ are scalar fields and $V^c$ is a null-like vector field.
\end{definition}
It would be an exercise in futility to define such an object if no solution to equation (\ref{disfkilldef}) existed. Let us consider a flat metric $\eta_{ab}$ and the $1-$form field $\sqrt{\beta}V_{a}=U_a=\partial_a\phi$ for scalar field $\phi$. We analyze this particular case following the notation of \cite{choquet-bruhat}.
\par
From the trace of the equation (\ref{disfkilldef}), the null-like disformal Killing equation is
\begin{equation}\label{k0}
\partial_aX_b+\partial_bX_a=\frac{2}{n}\eta_{ab}\partial_cX^c+U_aU_b
\end{equation}
Defining $\psi=\frac{2}{n}\partial_cX^c$ and differentiating equation (\ref{k0}) we have
\begin{eqnarray}
\partial_c\partial_aX_b+\partial_c\partial_bX_a=\eta_{ab}\partial_c\psi+U_b\partial_cU_a+U_a\partial_cU_b,\label{k1}\\
\partial_a\partial_bX_c+\partial_a\partial_cX_b=\eta_{bc}\partial_a\psi+U_c\partial_aU_b+U_b\partial_aU_c.\label{k2}
\end{eqnarray}
Subtracting (\ref{k1}) and (\ref{k2}):
\begin{equation}\label{k3}
\partial_b\left(\partial_cX_a-\partial_aX_c\right)=\eta_{ab}\partial_c\psi-\eta_{bc}\partial_c\psi+U_a\partial_cU_b-U_c\partial_aU_b.
\end{equation}
Differentiating again, considering that $\partial_d\partial_b\left(\partial_cX_a-\partial_aX_c\right)=\partial_d\partial_b\left(\partial_cX_a-\partial_aX_c\right)$, and contracting the result with $\eta^{ad}\eta^{bc}$, yields
\begin{equation}\label{b1}
(n-1)\Box \psi+\eta^{ad}\eta^{bc}\left(\partial_b\partial_a\phi\right)\left(\partial_c\partial_d\phi\right)-\left(\Box\phi\right)^2=0,
\end{equation}
where $\Box=\eta^{ab}\partial_a\partial_b$. For simplicity, let us assume the case in which $\partial_{\alpha}\phi=2C_{\alpha}$ (constant) in Cartesian coordinates (now we are using greek indices because we are fixing the coordinate system to be  the Cartesian one). This way, Eq. (\ref{b1}) is simply
\begin{equation}
\Box\psi=0.
\end{equation}
We will chose the solution to be
\begin{eqnarray}
\psi=2B\, +\, 4B_{\alpha}x^{\alpha},
\end{eqnarray}
where $B, B_{\alpha}$ are constants. Substituting in (\ref{k3}), and integrating
\begin{equation}\label{k4}
\partial_{\lambda}X_{\alpha}-\partial_{\alpha}X_{\lambda}=4\left(B_{\lambda}x_{\alpha}-B_{\alpha}x_{\lambda}\right)+2A_{\alpha\lambda},
\end{equation}
where $A_{\alpha\lambda}=-A_{\lambda\alpha}$. Summing Eqs. (\ref{k0}) and (\ref{k4}), we have
\begin{equation}
\partial_{\lambda}X_{\alpha}=2(B_{\lambda}x_{\alpha}-B_{\alpha}x_{\lambda})+A_{\alpha\lambda}+\eta_{\alpha\lambda}\left(B+2B_{\beta}x^{\beta}\right)+2C_{\lambda}C_{\alpha}.
\end{equation}
Finally, integrating this equation we find
\begin{equation}
X_{\alpha}=A_{\alpha}+(A_{\alpha\lambda}+2C_{\alpha}C_{\lambda})x^{\lambda}+Bx_{\alpha}+2B_{\lambda}x_{\alpha}x^{\lambda}-B_{\alpha}x_{\lambda}x^{\lambda},
\end{equation}
where $A_{\alpha}$ are arbitrary constants.
\par
This solution can be decomposed as a linear combination of the generators of the conformal group \cite{choquet-bruhat}
\begin{eqnarray*}
{\bf p}_{\alpha}=\partial_{\alpha}, \ \ \ {\bf m}_{\alpha\beta}=x_{\alpha}\partial_{\beta}-x_{\beta}\partial_{\alpha},\\
{\bf d}=x^{\alpha}\partial_{\alpha}, \ \ \ {\bf k}_{\alpha}=2x_{\alpha}x^{\beta}\partial_{\beta}-x_{\beta}x^{\beta}\partial_{\alpha},
\end{eqnarray*}
and the fixed vector
\begin{equation}
{\bf w}=2C^{\alpha}C_{\beta}x^{\beta}\partial_{\alpha}.
\end{equation}
\par
To illustrate our construction, consider the coordinate transformation induced by ${\bf w}$:
\begin{equation}
x'^{\alpha}=x^{\alpha}+2C^{\alpha}C_{\beta}x^{\beta}.
\end{equation}
This way, the light cone transforms as
\begin{equation}
\eta_{\alpha\beta}x'^{\alpha}x'^{\beta}=\left(\eta_{\alpha\beta}+4C_{\alpha}C_{\beta}\right)x^{\alpha}x^{\beta}=\left(\eta_{\alpha\beta}+U_{\alpha}U_{\beta}\right)x^{\alpha}x^{\beta}.
\end{equation}
Which means that a coordinate transformation induced by the disformal Killing equation preserves the disformal structure.


\subsection{Generalized Weyl transformations}\label{weylsubsec}
Weyl geometry is a generalization of Riemannian geometry, that presents a non-metricity tensor, i.e., the compatibility between the metric of the manifold and the connection is determined by the rule
\begin{equation}\label{w1}
\nabla_cg_{ab}=\sigma_cg_{ab},
\end{equation}
where $\sigma_c$ is a $1$-form field named the Weyl field. By assuming this {\it W-compatibility condition}, along with the torsionless connection requirement, it is possible to generalize Levi-Civita's theorem to find the unique connection that satisfies (\ref{w1}), given by
\begin{equation}\label{w2}
\Gamma^c_{ab}=\frac{1}{2}g^{cd}\left(\partial_a g_{bd}+\partial_b g_{ad}-\partial_d g_{ab}\right)-\frac{1}{2}g^{cd}\left(\sigma_a g_{bd}+\sigma_b g_{ad}-\sigma_d g_{ab}\right).
\end{equation}
\par
Such geometry presents an inherent symmetry property. By performing the simultaneous transformation, named Weyl transformations, $(g_{ab}\, ,\, \sigma_c)\mapsto (e^fg_{ab}\, ,\, \sigma_c+f_{,c})$, the form of the compatibility condition (\ref{w1}) and the connection coefficients (\ref{w2}) are preserved (here $f=f(x)$ is a scalar function). Which means that for the simultaneous transformations $\widehat{g}_{ab}\doteq e^fg_{ab}$ and $\widehat{\sigma}_c\doteq \sigma_c+f_{,c}$, the compatibility condition is transformed to
\begin{equation}
\nabla_c\widehat{g}_{ab}=\widehat{\sigma}_c\widehat{g}_{ab},
\end{equation}
and consequently the connection coefficients have the same functional dependence as (\ref{w2}), but depending of the pair $(\widehat{g}_{ab},\widehat{\sigma}_c)$. This means that different choices of the scalar function $f$ defines different {\it frames} in this geometry.
\par
An interesting possibility consists in defining $\sigma=d\phi$, i.e., the Weyl field as the differential of a scalar field $\phi$. This defines the so called integrable Weyl geometry. This way, looking at the Weyl transformations, by choosing $f=-\phi$, there exists a Riemannian frame 
\begin{equation}
\left(g_{ab},\phi\right)\longmapsto \left(\widehat{g}_{ab}=e^{-\phi}g_{ab},d\widehat{\phi}=0\right),
\end{equation}
meaning that the integrable Weyl geometry can be effectivelly described as a Riemannian geometry with an effective conformal metric. In fact, considering the conformal relation $\widehat{g}_{ab}=e^{-\phi}g_{ab}$, it is straightforward to check that it is invariant under the Weyl transformation, i.e., if we transform $\tilde{g}_{ab}=e^fg_{ab}$ and $\tilde{\phi}=\phi+f$, we have
\begin{equation}\label{w4}
\widehat{g}_{ab}=e^{-\phi}g_{ab}\longmapsto e^{-\tilde{\phi}}\tilde{g}_{ab}.
\end{equation}
\par
Since there exists a Riemannian frame, a different form of treating the integrable geometry and deriving the Weyl transformation consists in treating it as a Riemannian geometry in an effective, conformal metric. This way, the Riemannian compatibility condition
\begin{equation}
\nabla_c\left(e^{-\phi}g_{ab}\right)=0
\end{equation}
is equivalent to equation (\ref{w1}) for $\sigma=d\phi$. And even the connection coefficients (\ref{w2}) are the Christoffel symbols of $e^{-\phi}g_{ab}$. Since the Weyl transformation preserves the form of the effective metric by $(\ref{w4})$, it also preserves the W-compatibility condition induced by the conformal metric
\begin{equation}
\nabla_c\left(e^{-\phi}g_{ab}\right)=0\Leftrightarrow \nabla_cg_{ab}=\phi_{,c}g_{ab}.
\end{equation}
\par
This construction can be generalized to an induced geometry inspired by a disformal transformation (see \cite{Yuan:2015tta} for the time-like case). Consider a disformal relation 
\begin{equation}\label{d1}
\widehat{g}_{ab}=\alpha g_{ab}+\beta g_{ac}g_{bd}V^{c}V^{d}.
\end{equation}
Performing a disformal transformation $(g_{ab},V^{c},\lambda,\gamma)\longmapsto (\tilde{g}_{ab})$,
\begin{equation}
\tilde{g}_{ab}=\lambda g_{ab}+\gamma g_{ac}g_{bd}V^cV^d.
\end{equation}
Inverting this relation we have\footnote{Unlike the time-like case, in the null case, the inverse transformation is not equal to the transformation of the inverse metric.}
\begin{equation}
g_{ab}=\frac{1}{\lambda}\tilde{g}_{ab}-\frac{\gamma}{\lambda^3}\tilde{V}_a\tilde{V}_b,
\end{equation}
where $\tilde{V}_a\doteq \tilde{g}_{ab}V^b$.
Substituting in Eq. (\ref{d1}), we have
\begin{equation}
\widehat{g}_{ab}=\frac{\alpha}{\lambda}\tilde{g}_{ab}+\frac{1}{\lambda^2}\left(\beta-\frac{\alpha\, \gamma}{\lambda}\right)\tilde{V}_a\tilde{V}_b.
\end{equation}
If we define
\begin{equation}
\tilde{\alpha}=\frac{\alpha}{\lambda}, \ \ \ \tilde{\beta}=\frac{1}{\lambda^2}\left(\beta-\frac{\alpha\, \gamma}{\lambda}\right),
\end{equation}
the disformal relation is preserved. Therefore, for the simultaneous transformations
\begin{eqnarray}\label{t1}
\left\{
\begin{array}{ccc}
\tilde{g}_{ab}&=&\lambda\, g_{ab}+\gamma V_aV_b,\\
\tilde{V}_a&=&\lambda V_a,\\
\tilde{\alpha}&=&\alpha/\lambda,\\
\tilde{\beta}&=&(\beta-\alpha\gamma/\lambda)/\lambda^2,
\end{array}
\right.
\end{eqnarray}
we shall have 
\begin{equation}
\alpha\, g_{ab}+\beta\, V_aV_b\longmapsto \tilde{\alpha}\, \tilde{g}_{ab}+\tilde{\beta}\, \tilde{V}_a\tilde{V}_b\, .
\end{equation}
For the case of conformal transformations, as we saw above, the simultaneous transformations that preserve the conformal relation are
\begin{eqnarray}\label{t2}
\left\{
\begin{array}{ccc}
\tilde{g}_{ab}&=&\lambda\, g_{ab}\, , \\
\tilde{\alpha}&=&\alpha/\lambda\, .
\end{array}
\right.\Longrightarrow  \alpha\, g_{ab}\longmapsto \tilde{\alpha}\, \tilde{g}_{ab}.
\end{eqnarray}
These are the Weyl transformations, and the preservation of the conformal relation is related the invariance of the compatibility condition in integrable Weyl geometry.
\par
The simultaneous transformations (\ref{t1}), can be regarded as those that preserve the compatibility condition between the metric and the connection induced by a disformal transformation in the sense of \cite{Yuan:2015tta}, where the functions $(\lambda,\gamma)$ label different frames of the induced geometry.

\section{The disformal operator revisited}\label{disfopsec}
In Refs. \cite{Bittencourt:2015ypa,Carvalho:2015omv} the group structure and disformal operator were studied in detail for the time-like case. However, some algebraic differences occur when one deals with null-like disformal transformations/operators. Remember that time-like disformal transformation is defined as 
\begin{eqnarray}
\hat{g}_{ab}&=&\alpha g_{ab}+\frac{\beta}{V^2}V_{a}V_{b},\label{cov}
\end{eqnarray}
with inverse given by
\begin{eqnarray}
\hat{g}^{ab}&=&\frac{1}{\alpha}g^{ab}-\frac{\beta}{\alpha(\alpha+\beta)}\frac{V^{a}V^{b}}{V^2}.\label{contrav}
\end{eqnarray}
In \cite{Carvalho:2015omv} it was verified that the operator $\hat{g}^{a}_{\ b}\equiv\hat{g}_{cb}{g}^{ac}$  is the square of the disformal operator, and that such an operator has a basis of eigenvectors. In fact, one can choose an orthonormal basis (orthonormal with respect to $g$) at a point $p\in\mathcal{M}$ where $\sqrt{V^2}e^a_{(0)} = V^a $ to show that, in that basis,
\begin{eqnarray}
\hat{g}=
\left(\begin{array}{cccc}
  \alpha+\beta  & 0 &0 & 0 \\
  0 & -\alpha & 0&0 \\
  0 & 0 & -\alpha &0 \\
  0 & 0 & 0 &-\alpha \\
\end{array}
\right),
\end{eqnarray}
and therefore, 
\begin{eqnarray}
D^2_{\mbox{time-like}}=
\left(\begin{array}{cccc}
  \alpha+\beta  & 0 &0 & 0 \\
  0 & \alpha & 0&0 \\
  0 & 0 & \alpha &0 \\
  0 & 0 & 0 &\alpha \\
\end{array}
\right).
\end{eqnarray}
The trick consisting of taking a unit vector in the direction of $V^{a}$ and completing this set to an orthonormal basis proves that the disformal operator has a basis of eigenvectors.

Intending to extend this idea to the case of a null-like disformal transformation, the overt problem is that $V^a$ cannot belong to an orthonormal basis. One could, however, do the second best thing: Consider a null-like tetrad basis at an arbitrary point $p\in\mathcal{M}$ (i.e., a Newman-Penrose basis) where $V^a=e^a_{(0)}$. The background metric in this basis is given by
\begin{eqnarray}
g=
\left(\begin{array}{cccc}
 0  & 1 &0 & 0 \\
  1 & 0 & 0&0 \\
  0 & 0 & 0 &-1 \\
  0 & 0 & -1 &0 \\
\end{array}
\right).
\end{eqnarray}
In this case we have
\begin{eqnarray}\label{disfopnull}
\hat{g}=
\left(\begin{array}{cccc}
 \beta  & \alpha &0 & 0 \\
  \alpha & 0 & 0&0 \\
  0 & 0 & 0 &-\alpha \\
  0 & 0 & -\alpha &0 \\
\end{array}
\right) \Longrightarrow
D^2_{\mbox{null}}=
\left(\begin{array}{cccc}
  \alpha  & \beta &0 & 0 \\
  0 & \alpha & 0&0 \\
  0 & 0 & \alpha&0 \\
  0 & 0 & 0 &\alpha \\
\end{array}
\right).
\end{eqnarray}
The null tetrad basis chosen is almost a Jordan basis. Changing $e^a_{(0)}\rightarrow\beta e^a_{(0)}$ and keeping the other vectors in our basis unchanged, we have our Jordan basis and the disformal operator is given by
\begin{eqnarray}\label{finaljordan}
D^2_{\mbox{null}}=
\left(\begin{array}{cccc}
  \alpha  & 1 &0 & 0 \\
  0 & \alpha & 0&0 \\
  0 & 0 & \alpha&0 \\
  0 & 0 & 0 &\alpha \\
\end{array}
\right).
\end{eqnarray}
It is now obvious that $D^2$ has only $\alpha$ as an eigenvalue with algebraic multiplicity $4$, geometric multiplicity $3$ and  the minimal polynomial given by $p_{M}(x) = (\alpha - x)^2$ .  Hence, no basis of eigenvectors of $D^2_{\mbox{null}}$ can exist. Finally, it is easy to verify that the product of matrices of the type (\ref{disfopnull}) is a matrix of the same type, showing that the group structure is also satisfied by null type disformal operators/transformations.

Time-like disformal transformations keep the causal character of the vector used in the disformal transformation. This is also the case for null-like transformations, as the reader can easily verify (the fact to the matter is that the vector used to define disformal transformations is always an eigenvector of the disformal operator, therefore the subspace spanned by it is preserved). As a result, in the null case the background and foreground light cones coincide along one null direction (one, and only one, since the last two eigenvectors of (\ref{finaljordan}), although null-like in character, are complex), which explains the narrow relation between null-like disformal transformations and the conformal ones pointed out in this work (see also figure \ref{fig1} below). More information can be found in an independent study performed in \cite{Baccetti:2012ge}.

\begin{figure}[H]
\centering
\subfloat{\includegraphics[width=60mm]{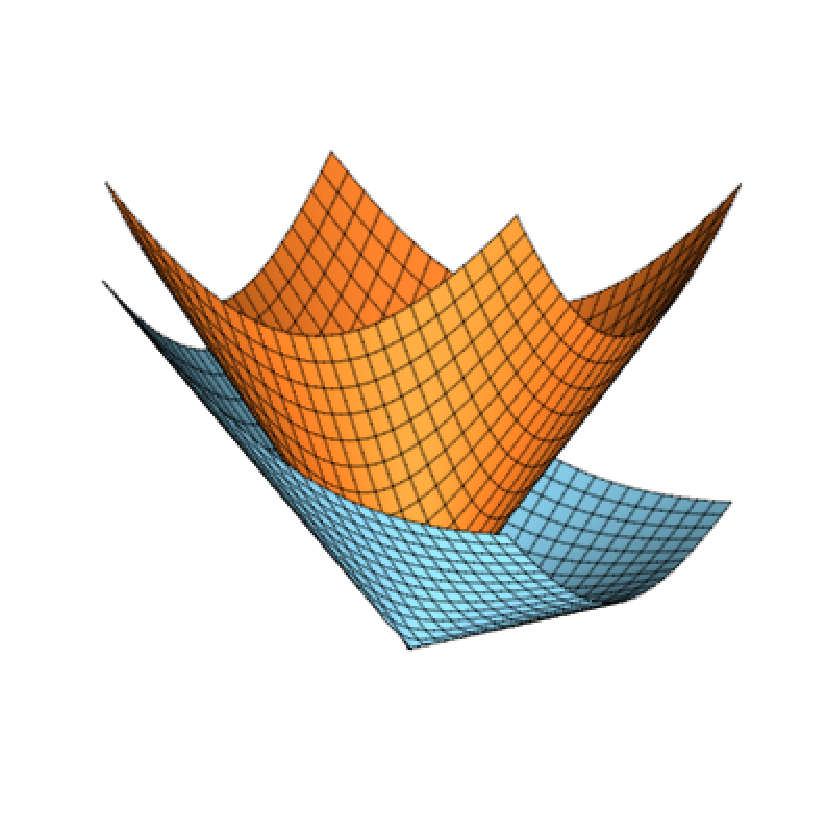}}
\hfill
\subfloat{\includegraphics[width=60mm]{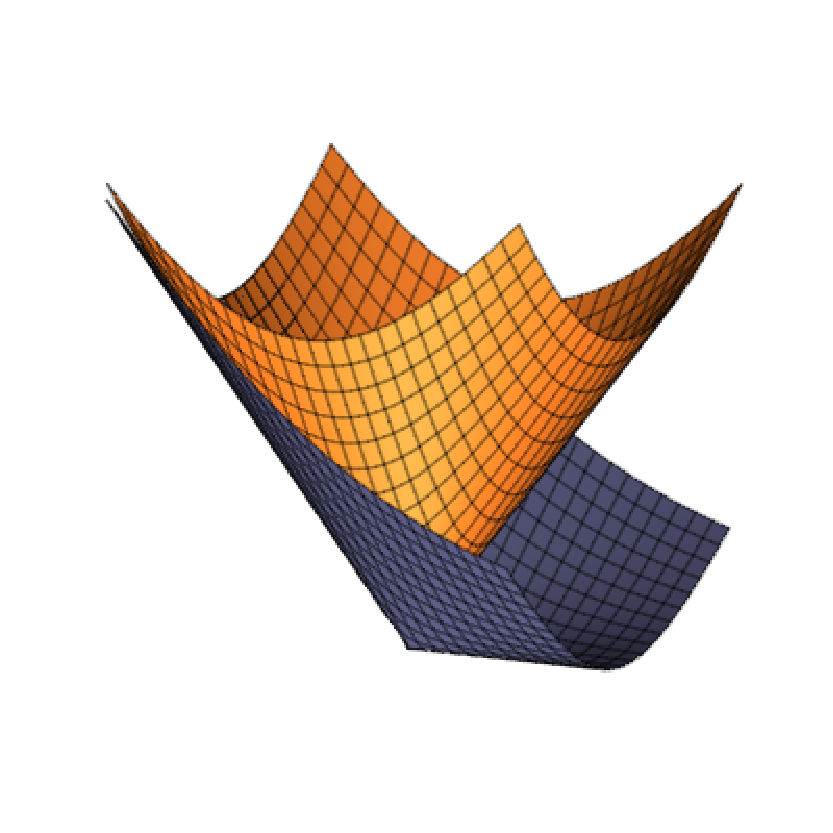}}
\caption{\footnotesize{The figure in the left represents the relation between the light cones after a time-like disformal transformation (check that the only coincide at the origin) whereas the figure on the right represents the relation between the light cones after a null-like disformal transformation (check that it is nearly conformal in the vicinity of the shared null direction).}}
\label{fig1}
\end{figure}

\section{Discussion}\label{conclusion}
We revised how some geometric objects change when two affine connections are defined on $\mathcal{M}$ and applied the results for the case of a null-like disformal transformation. As pointed out throughout the text, similar results can be found for both null-like disformal transformations and conformal transformations. Many of these similarities are due to the geometric fact that the light cones of a null-like disformal metric share a null direction with the background light cones (whereas the conformal light cones share all null directions), and are not expected to be true for time-like disformal transformations. Furthermore, the new results presented here are accompanied with important physical examples. 

To motivate further studies in this area, we introduced the concept of a disformal Killing vector field showing a explicit solution in a particular case. We also generalized the concept of Weyl transformations that preserve the null-like disformal relation.

We revisited the disformal operator and its algebraic properties, showing that it is not diagonalizable in the present case and the suitable basis to study it is a basis consisting of null vectors. 

Since we decomposed the disformal transformation of the metric into a disformal operator acting on vector fields, we wondered if it would be possible to further decompose this transformation as an action on spinors. Therefore, we generalized the disformal transformation beyond the spacetime version to the spinor space. We analysis led us to conclude that there is no disformal transformation in the spin basis that might induce a disformal metric either in the null-like or time-like case (except the conformal case). 

The present systematic study might be useful for future developments on disformal gravity, now using a null-like vector field. In such case, it has been shown in \cite{Galtsov:2018xuc} that as opposite to the time-like case, gravitational waves would propagate with the speed of light which is compatible with the recent detections by LIGO.

\appendix
\section{Spinors and disformal transformation}\label{app-disf}
As mentioned in the abstract, there is a hindrance in defining a disformal transformation in the spinor space that propagates to the spacetime geometry. This is not the  case for conformal transformations, as shown below, and can be find in standard text on the subject \cite{wald84,stewart}. This appendix was organized to be as self-contained as possible. For a comprehensive text about the subject the reader is encouraged to read the aforementioned references as well as \cite{penroserindler}.
\subsection{Spinors and spinor space}
Let us start with a two-dimensional vector space over $\mathbb{C}$. The elements in $S$ should be denoted with a superscript\footnote{We shall adopt the following index convention: capital Latin letters for spinorial objects.}. For example $\xi^A$ is an element of S.
The set of linear maps from $S$ to $\mathbb{C}$, i.e. the {\it dual space} of $S$, is going to be denoted by $S^*$. We shall denote its elements with subscripts. Thus, for example, $\eta_A$ is an element of $S^*$. The set of anti-linear maps from $S$ to $\mathbb{C}$ is denoted by $\bar{S}^*$ and it is called the {\it complex conjugate dual space}. We shall denote its elements with primed subscripts. Thus, for example, $\psi_{A^\prime}$ is an element of $\bar{S}^*$. Finally, the {\it complex conjugate space} is the dual space of $\bar{S}^*$ and it is denoted by $\bar{S}$. The elements of $\bar{S}$ should be denoted with primed superscript such as in $\phi^{A^\prime}$. All four vector spaces defined above are two-dimensional vector spaces over the complex numbers $\mathbb{C}$.

A tensor $T$ of type $(k,l;k^\prime,l^\prime)$ over $S$ is a multilinear map
\begin{eqnarray}
T: \underbrace{S^*\times\ldots\times S^*}_{k}\times\underbrace{S\times\ldots\times S}_{l}\times\underbrace{\bar{S}^*\times\ldots\times \bar{S}^*}_{k^\prime}\times\underbrace{\bar{S}\times\ldots\times \bar{S}}_{l^\prime}\longrightarrow \mathbb{C},
\end{eqnarray}
and we shall use a natural generalization of the index notation used for tensors. The relative order of primed and unprimed indices is irrelevant whereas the relative order within unprimed indices and the relative order within primed indices is relevant as in the case of tensors over a real vector space. Contraction is to be defined as usual with the only exception being that it should be realized within primed or within unprimed indices and never between one primed and one unprimed index.

A {\it symplectic form} on $S$ is a skew-symmetric and nondegenerate tensor of type $(0,2;0,0)$ over $S$. If a particular such tensor $\epsilon_{AB} = -\epsilon_{BA}$ is chosen, the pair $(S,\epsilon_{AB})$ is called {\it spinor space}, the elements of $S$ are called {\it spinors} and tensors over $S$ are called {\it spinorial tensors}. Henceforth we shall call $\epsilon_{AB}$ the $\epsilon$-spinor.

We can use $\epsilon_{AB}$ to map spinors into dual spinors via $\xi^A \mapsto \epsilon_{AB}\xi^A$, and since $\epsilon_{AB}$ is nondegenerate, we obtain an isomorphism between $S$ and $S^*$ much like the one that would be obtained using an inner product on $S$. As usual we denote $\xi_B = \epsilon_{AB}\xi^A$. We use this isomorphism to lower any unprimed index of spinorial tensors.

Note, however, that since $\epsilon_{AB}$ is skew-symmetric it makes a difference which index of $\epsilon_{AB}$ is being contracted.  Following the standard conventions, $\epsilon^{AB}$ is defined to be {\it minus} the inverse of $\epsilon_{AB}$, i.e. the skew-symmetric tensor of type $(2,0;0,0)$ which satisfies
\begin{eqnarray}
\epsilon^{AB}\epsilon_{BC} = -\delta^A_{\ \ C}.
\end{eqnarray}
Furthermore, we should use $\epsilon^{AB}$ to raise unprimed indices. To prevent undesired negative signs one should just pay attention to the rules
\begin{eqnarray}
\xi_B = \epsilon_{AB}\xi^A = - \epsilon_{BA}\xi^A,\ \ \mu^A = \epsilon^{AB}\mu_B = -\epsilon^{BA}\mu_B
\end{eqnarray}
and
\begin{eqnarray}
\xi_A\phi^A = (\epsilon_{BA}\xi^B)\phi^A =-\epsilon_{AB}\xi^B\phi^A =-\xi^B\phi_B.
\end{eqnarray}
Finally, we shall denote the tensors obtained from $\epsilon_{AB}$ and $\epsilon^{AB}$ via complex conjugation by $\epsilon_{A^\prime B^\prime}$ and $\epsilon^{A^\prime B^\prime}$\footnote{We shall omit the bar above such tensors as in \cite{stewart}. Every other complex conjugation will be followed by a primed index and a bar over the kernel letter.}, respectively, and use them to lower and raise primed indices following the same rules as before.

If $o^A$ and $\iota^A$ are two linearly independent spinors satisfying $o_B\iota^B = \epsilon_{AB}o^A \iota^B =1$, we call $(o^A,\iota^A)$ a {\it spin basis} or a {\it spin frame}. For completeness, the relation between the spin basis and the tensors $\epsilon_{AB}$ and $\epsilon^{AB}$ are given by
\begin{eqnarray}
\epsilon^{AB} = o^A\iota^B -o^B\iota^A,\ \ \epsilon_{AB} = o_A\iota_B -o_B\iota_A
\end{eqnarray}
and it is also valid for primed objects, i.e. 
\begin{eqnarray}
\epsilon^{A^\prime B^\prime} = \bar{o}^{A^\prime}\bar{\iota}^{B^\prime} -\bar{o}^{B^\prime}\bar{\iota}^{A^\prime},\ \ \epsilon_{A^\prime B^\prime} = \bar{o}_{A^\prime}\bar{\iota}_{B^\prime} -\bar{o}_{B^\prime}\bar{\iota}_{A^\prime}.
\end{eqnarray}

\subsection{Vectors}
A spinor $\tau$ (or spinorial tensor) is called {\it hermitian} whenever $\bar{\tau}=\tau$. The fact to the matter is that the set of hermitian spinorial tensors of the type $(1,0;1,0)$ is a four-dimensional vector space over $\mathbb{R}$ that is usually (pointwise and smoothly) identified with $T_p\mathcal{M}$, the tangent space at a point in a four-dimensional manifold (to learn how this identification is done throughout $\mathcal{M}$ see \cite{stewart,penroserindler}, for instance). Analogously, the set of hermitian spinors of the type $(0,1;0,1)$ is to be (pointwise and smoothly) identified with $T^*_p\mathcal{M}$. The objects connecting spinor space and spacetime, at each point, are the so-called {\it Infeld-van der Waerden symbols} $\sigma_a^{\ \ AA^\prime}$ and its inverse $\sigma^a_{\ \ AA^\prime}$. For instance, the relation between world-vectors and spinors is given by
\begin{eqnarray}
v^a = \sigma^a_{\ \ AA^\prime}V^{AA^\prime} 
\end{eqnarray}
and for world-dual vectors 
\begin{eqnarray}
\omega_a = \sigma_a^{\ \ AA^\prime}\omega_{AA^\prime}.
\end{eqnarray}
For  general world-tensors one should apply the rule above for every upper and lower index. Since the Infeld-van der Waerden symbols represent an isomorphism, we often write expressions such as $V_{a} \sim V_{AA^\prime}$ to represent a world-vector in terms of its spinor counterpart.

For us, the most useful multivalent hermitian spinors are
\begin{eqnarray}\label{spinmetrics}
g_{ABA^\prime B^\prime}=\epsilon_{AB}\epsilon_{A^\prime B^\prime},\ \ g^{ABA^\prime B^\prime}=\epsilon^{AB}\epsilon^{A^\prime B^\prime}.
\end{eqnarray}
Their respective world-tensor is the metric and their relation is given by
\begin{eqnarray}\label{spinmetrics}
g_{ab} =\epsilon_{AB}\epsilon_{A^\prime B^\prime}\sigma_a^{\ \ AA^\prime}\sigma_b^{\ \ BB^\prime},\ \ g^{ab} =\epsilon^{AB}\epsilon^{A^\prime B^\prime}\sigma^a_{\ \ AA^\prime}\sigma^b_{\ \ BB^\prime}.
\end{eqnarray}
The main goal of section \ref{spindisf} is to investigate if one could, in the presence of two metric tensors, have a simultaneous representation of both metrics analogous to Eqs. \ref{spinmetrics}. 

The reader should verify that, given a spinor $\xi^A$, then $V^{AA^\prime} = \xi^A \bar{\xi}^{A\prime}$ represents a null vector of $T_p\mathcal{M}$. That is the reason one usually thinks of a spinor as a square root of a null-like vector. 

Finally, given a spin basis $(o^A,\iota^A)$ we can construct four linearly independent null-like world-vectors $l^a,n^a,m^a$ and $\bar{m}^a$ represented by
\begin{eqnarray}
l^a \sim l^{AA^\prime} = o^A\bar{o}^{A^\prime},\ \ n^a\sim n^{AA^\prime} = \iota^A\bar{\iota}^{A^\prime}\label{spin1}\\
m^a\sim m^{AA^\prime} = o^A\bar{\iota}^{A^\prime},\ \ \bar{m}^a\sim \bar{m}^{AA^\prime} = \iota^A\bar{o}^{A^\prime}\label{spin2}.
\end{eqnarray}
Every spin basis, therefore, defines a Newman-Penrose tetrad. A Newman-Penrose tetrad satisfy
\begin{eqnarray}\label{nptetradcontractions}
l^a n_a =l_a n^a = - m_a \bar{m}^a =-m^a \bar{m}_a=1,
\end{eqnarray}
while the other contractions vanish.

\subsection{Disformal transformations -- Spinor version}\label{spindisf}
We now turn the spinor formalism behind disformal metrics. We have seen that the spacetime metric is given by
\begin{eqnarray}
g_{ab}=\epsilon_{AB}\epsilon_{A^{\prime}B^{\prime}}\sigma_{a}^{\ \ AA^{\prime}}\sigma_{b}^{\ \ BB^{\prime}}.
\end{eqnarray}
Let us modify the Infeld-van der Waerden symbols in a certain way to obtain a new metric, this time conformal to $g_{ab}$.

If one defines $\hat{\sigma}_{a}^{\ \ AA^{\prime}}=\sqrt{\alpha}\sigma_{a}^{\ \ AA^{\prime}}$ and $\hat{g}_{ab}$ as
\begin{eqnarray}
\hat{g}_{ab}=\epsilon_{AB}\epsilon_{A^{\prime}B^{\prime}}\hat{\sigma}_{a}^{\ \ AA^{\prime}}\hat{\sigma}_{b}^{\ \ BB^{\prime}}.
\end{eqnarray}
It is easy to verify that $\hat{g}_{ab} = \alpha g_{ab}$, i.e.,  $\hat{g}$ is conformal to $g$. An alternative manner to define the conformal spacetime metric is to redefine the $\epsilon$-spinor $\epsilon_{AB}$ instead
of the Infeld-van der Waerden symbols, that is
\begin{eqnarray}
 \hat{g}_{ab}= \hat{\epsilon}_{AB} \hat{\epsilon}_{A^{\prime}B^{\prime}} \sigma_{a}^{\ \ AA^{\prime}} \sigma_{b}^{\ \ BB^{\prime}},
\end{eqnarray}
yielding the conformal $\epsilon$-spinor $\hat{\epsilon}_{AB}=\sqrt{\alpha}\epsilon_{AB}$. It is known that there exists a modification of the spinor basis $(o^{A},\iota^{A})$ producing the conformal $\epsilon$-spinor \cite{stewart}.

Let us see what happens when we try to repeat the procedure above for disformal metrics. Consider the modified Infeld-van der Waerden symbols given by
\begin{eqnarray}
\hat{\sigma}_{a}^{\ \ AA^{\prime}}= \sqrt{\alpha}\sigma_{a}^{\ \ AA^{\prime}} + \frac{\beta}{2\sqrt{\alpha}}V_{a}V^{AA^{\prime}},
\end{eqnarray}
where $V^a\sim V^{AA^{\prime}} = \xi^A \bar{\xi}^{A^\prime}$ is a fixed null-like vector field\footnote{Given any null world-vector $V^a\sim V^{AA^\prime}$, there exists a spinor $\xi^A$ such that $V^{AA^\prime}= \pm \xi^A \bar{\xi}^{A^\prime}$. In what follows there is no loss in generality if one assumes $V^{AA^{\prime}} = \xi^A \bar{\xi}^{A^\prime}$.}.
Therefore,
\begin{eqnarray}
\hat{g}_{ab}&=&\epsilon_{AB}\epsilon_{A^{\prime}B^{\prime}}\hat{\sigma}_{a}^{\ \ AA^{\prime}}\hat{\sigma}_{b}^{\ \ BB^{\prime}}\nonumber\\%
&=&\epsilon_{AB}\epsilon_{A^{\prime}B^{\prime}}\left[ \sqrt{\alpha}\sigma_{a}^{\ \ AA^{\prime}} + \frac{\beta}{2\sqrt{\alpha}}V_{a}V^{AA^{\prime}} \right]\left[ \sqrt{\alpha}\sigma_{b}^{\ \ BB^{\prime}} + \frac{\beta}{2\sqrt{\alpha}}V_{b}V^{BB^{\prime}} \right]\nonumber\\
&=&\alpha\epsilon_{AB}\epsilon_{A^{\prime}B^{\prime}}\sigma_{a}^{\ \ AA^{\prime}}\sigma_{b}^{\ \ BB^{\prime}} + \frac{\beta}{2}\epsilon_{AB}\epsilon_{A^{\prime}B^{\prime}}\sigma_{a}^{\ \ AA^{\prime}}V_{b}V^{BB^{\prime}}\\
&+& \frac{\beta}{2}\epsilon_{AB}\epsilon_{A^{\prime}B^{\prime}}\sigma_{b}^{\ \ BB^{\prime}}V_{a}V^{AA^{\prime}} + \frac{\beta^2}{4\alpha}\epsilon_{AB}\epsilon_{A^{\prime}B^{\prime}}V_{a}V_{b}V^{AA^{\prime}}V^{BB^{\prime}}.\nonumber
\end{eqnarray}
Let us simplify this equation by analyzing it term by term. The first term is given by
\begin{eqnarray}
\alpha\epsilon_{AB}\epsilon_{A^{\prime}B^{\prime}}\hat{\sigma}_{a}^{\ \ AA^{\prime}}\hat{\sigma}_{b}^{\ \ BB^{\prime}} = \alpha g_{ab}.
\end{eqnarray}
The second and third terms should be combined, after some index replacements, to yield 
\begin{eqnarray}
\frac{\beta}{2}\epsilon_{AB}\epsilon_{A^{\prime}B^{\prime}}\sigma_{a}^{\ \ AA^{\prime}}V_{b}V^{BB^{\prime}}+\frac{\beta}{2}\epsilon_{AB}\epsilon_{A^{\prime}B^{\prime}}\sigma_{b}^{\ \ BB^{\prime}}V_{a}V^{AA^{\prime}}=\nonumber\\
\beta\epsilon_{AB}\epsilon_{A^{\prime}B^{\prime}}\sigma_{a}^{\ \ AA^{\prime}}V_{b}V^{BB^{\prime}}.
\end{eqnarray}
But, 
\begin{eqnarray}
\epsilon_{AB}\epsilon_{A^{\prime}B^{\prime}}V^{BB^{\prime}}= \epsilon_{AB}\epsilon_{A^{\prime}B^{\prime}}\xi^B\bar{\xi}^{B^{\prime}}=(-\xi_{A})(-\bar{\xi}_{A^\prime})=\xi_{A}\bar{\xi}_{A^\prime}=V_{AA^\prime}
\end{eqnarray}
and $\sigma_{a}^{\ \ AA^{\prime}}V_{AA^\prime} = V_a $. In a nutshell, the second and third terms combined yield
\begin{eqnarray}
\beta\epsilon_{AB}\epsilon_{A^{\prime}B^{\prime}}\sigma_{a}^{\ \ AA^{\prime}}V_{b}V^{BB^{\prime}} = \beta V_a V_b.
\end{eqnarray}
Finally, the fourth term is given by
\begin{eqnarray}
\frac{\beta^2}{4\alpha}\epsilon_{AB}\epsilon_{A^{\prime}B^{\prime}}V_{a}V_{b}V^{AA^{\prime}}V^{BB^{\prime}} =\frac{\beta^2}{4\alpha}\epsilon_{AB}\epsilon_{A^{\prime}B^{\prime}}V_{a}V_{b}(\xi^{A}\bar{\xi}^{A^\prime})(\xi^{B}\bar{\xi}^{B^\prime})=0,
\end{eqnarray}
since $\epsilon_{AB}\xi^{A}\xi^{B}=0$.

In conclusion, we have shown that
\begin{eqnarray}
\hat{g}_{ab}= \epsilon_{AB} \epsilon_{A^{\prime}B^{\prime}} \hat{\sigma}_{a}^{\ \ AA^{\prime}} \hat{\sigma}_{b}^{\ \ BB^{\prime}} =\alpha g_{ab} + \beta V_a V_b.
\end{eqnarray}

Let us attempt to find this same decomposition, this time changing the $\epsilon$-spinor and keeping the Infeld-van der Waerden symbols, i.e.,
\begin{eqnarray}
\hat{g}_{ab}&=&\left[\alpha \epsilon_{AB} \epsilon_{A^\prime B^\prime}  + \beta (\xi_A \xi_B)(\bar{\xi}_{A^\prime}\bar{\xi}_{B^\prime})\right]\sigma_a^{\ \ AA^\prime} \sigma_b^{\ \ BB^\prime}\nonumber\\
&=&\hat{\epsilon}_{AB} \hat{\epsilon}_{A^\prime B^\prime} \sigma_a^{\ \ AA^\prime} \sigma_b^{\ \ BB^\prime}.
\end{eqnarray} 
The equation above shows that if there exists such an $\hat{\epsilon}_{AB} $ satisfying $\hat{g}_{ab}=\hat{\epsilon}_{AB} \hat{\epsilon}_{A^\prime B^\prime} \sigma_a^{\ \ AA^\prime} \sigma_b^{\ \ BB^\prime}$, then it must satisfy
\begin{eqnarray}
\hat{\epsilon}_{AB} \hat{\epsilon}_{A^\prime B^\prime} = \alpha \epsilon_{AB} \epsilon_{A^\prime B^\prime}  + \beta (\xi_A \xi_B)(\bar{\xi}_{A^\prime}\bar{\xi}_{B^\prime}),
\end{eqnarray}
and such an $\hat{\epsilon}_{AB}$ must be skew-symmetric. Furthermore, if $\beta=0$ one should recover the conformal transformation of the $\epsilon$-spinor. Therefore,
\begin{eqnarray}
\hat{\epsilon}_{AB} = \sqrt{\alpha}\epsilon_{AB} + \sqrt{\beta}T_{AB}
\end{eqnarray}
where $T_{AB}=-T_{BA}$, yielding
\begin{eqnarray}
\hat{\epsilon}_{AB} \hat{\epsilon}_{A^\prime B^\prime} &=& \alpha\epsilon_{AB}\epsilon_{A^\prime B^\prime} + \sqrt{\alpha\beta}(\epsilon_{AB}\bar{T}_{A^\prime B^\prime}+\epsilon_{A^\prime B^\prime}T_{AB})\nonumber\\
&+& \beta T_{AB}\bar{T}_{A^\prime B^\prime}.
\end{eqnarray}
Comparing this result to the desired $\hat{\epsilon}_{AB}$ one finds that, to fulfill the aforementioned conditions,
\begin{eqnarray}
\left\{
\begin{array}{c}
\beta T_{AB}\bar{T}_{A^\prime B^\prime}=\beta (\xi_A \xi_B)(\bar{\xi}_{A^\prime}\bar{\xi}_{B^\prime})\\[10pt]
\sqrt{\alpha\beta}(\epsilon_{AB}\bar{T}_{A^\prime B^\prime}+\epsilon_{A^\prime B^\prime}T_{AB})=0.
\end{array}
\right.
\end{eqnarray}
That would imply $T_{AB}=\sqrt{\beta}\xi_A \xi_B$ and that $\epsilon_{AB}\bar{T}_{A^\prime B^\prime}=-\epsilon_{A^\prime B^\prime}T_{AB}$. The first problem is that the resulting $T_{AB}$ fails to be skew-symmetric, which is one of the conditions we have established above. Not only that, the second condition says that the multivalent spinor $X_{ABA^\prime B^\prime}\equiv\epsilon_{AB}\bar{\xi}_{A^\prime}\bar{\xi}_{B^\prime}$ is skew-hermitian, i.e. $X_{ABA^\prime B^\prime}= -\bar{X}_{A^\prime B^\prime AB}$. It suffices to verify that in a basis $X_{ABA^\prime B^\prime}\neq -\bar{X}_{A^\prime B^\prime AB}$. For instance, apply $\xi^A \eta^B \bar{\eta}^{A^\prime}\bar{\eta}^{B^\prime}$ on  $X_{ABA^\prime B^\prime} +\bar{X}_{A^\prime B^\prime AB}=0$, for $\xi_{A}\eta^A \neq 0$, to check that the left side is not zero. Hence, both conditions fail to be true. It means that there is no transformation of the spin basis, except when it is conformal, that produces a spacetime disformal metric, since no transformation could produce any viable $\hat{\epsilon}_{AB}$. {\it But could there be any manner to perform such a decomposition?} Before answering this question let us see the time-like case. We could use a similar approach and define
\begin{eqnarray}
\hat{\sigma}_a^{\ \ AA^\prime} = \sqrt{\alpha}\sigma_{a}^{\ \ AA^{\prime}} + (\sqrt{\alpha+\beta}-\sqrt{\alpha})V_{a}V^{AA^{\prime}}
\end{eqnarray}
and obtain the time-like disformal metric via $\hat{g}_{ab}= \epsilon_{AB} \epsilon_{A^{\prime}B^{\prime}} \hat{\sigma}_{a}^{\ \ AA^{\prime}} \hat{\sigma}_{b}^{\ \ BB^{\prime}}$. The second step would be try to find the decomposition in terms of the $\epsilon$-spinor as before. But one should not repeat the steps above to show that there is no decomposition in terms of a modified $\epsilon$-spinor. The fact to the matter is that space of skew-symmetric tensors of type $(0,2;0,0)$ over $S$ is one-dimensional and therefore any $\hat{\epsilon}_{AB}$ should be always a multiple of the the original $\epsilon$-spinor. Unless the transformation is purely conformal, the spinor structure cannot be associated with two different spacetime metric tensors and one of the spacetime metrics should be seen merely as a tensor constructed in terms of the other. In a nutshell we have

\begin{theorem}\label{thm1}
Given a spacetime $(M,g_{ab})$ with an underlying spinor structure, a transformation of the metric tensor $g_{ab} \mapsto \widehat{g}_{ab}$ is compatible with the spinor structure if, and only if, the transformation is conformal.
\end{theorem}
\begin{proof}
For any $\Sigma_{AB}$, skew-symmetric spinorial tensor of type $(0,2;0,0)$ there exist $\chi_A,\zeta_A \in S^*$ such that
\begin{eqnarray}\label{Sigma}
\Sigma_{AB}= \chi_A \zeta_B - \chi_B\zeta_A.
\end{eqnarray}
Wrinting $\chi_A = a o_A + c \iota_A$ and $\zeta_A = b o_A + d \iota_A$ we see that Eq. (\ref{Sigma}) reduces to
\begin{eqnarray}
\Sigma_{AB}= (ad-cb)\epsilon_{AB}.\\\nonumber
\end{eqnarray}
This is a proof that the set of spinorial tensors of type $(0,2;0,0)$ is one-dimensional. Therefore any two metrics constructed with \eqref{spinmetrics} will differ by a conformal factor.\\
\end{proof}
It is important to emphasize that the set of spinorial tensors of type $(0,2;0,0)$ is one-dimensional over $\mathbb{C}$, but a complex multiple of $\varepsilon_{AB}$ will lead to torsion \cite{penroserindler}.

\begin{remark}
This result was already expected: A spinor structure on $(\mathcal{M},g_{ab})$ will determine the null vector fields of $(\mathcal{M},g_{ab})$. Therefore, changes in the metric tensor will, in general, change the causal character of vector fields. That is to say that a spinor structure over a sapcetime is compatible only with conformal transformations of the metric tensor. 
Therefore, to transform the spacetime metric one should use the objects that establish the connection between spinor space and spacetime. The Infeld-van der Waerden symbols are, then, the spinorial counterpart of the disformal transformation we have seen in section \ref{disformasse}.
\end{remark}

\section*{Acknowledgements}
IPL was financed by the Coordena\c{c}\~ao de Aperfei\c{c}oamento de Pessoal de N\'ivel Superior - Brasil (CAPES) - Finance Code 001. GGC would like to thank L. C. B. da Silva and J. A. M. Gondim for useful discussions and the National Council for Scientific and Technological Development (CNPq-Brazil) for financing this project with the post-doctoral grant No. 167485/2017-2.



\end{document}